\newtheorem{theorem}{Theorem}[section]
\newtheorem{proposition}[theorem]{Proposition}
\newtheorem{remark}[theorem]{Remark}
\newtheorem{lemma}[theorem]{Lemma}
\newtheorem{definition}[theorem]{Definition}
\newcommand{\ri}{\mathrm{i}}
\newcommand{\cE}{\mathcal{E}}
\newcommand{\C}{\mathbb{C}}
\newcommand{\R}{\mathbb{R}}
\newcommand{\N}{\mathbb{N}}
\newcommand{\Z}{\mathbb{Z}}
\newcommand\ba{\mathbf{a}}
\newcommand\bh{\mathbf{h}}
\newcommand\bk{\mathbf{k}}
\newcommand\bQ{\mathbf{Q}}
\newcommand\bt{\mathbf{t}}
\newcommand\bu{\mathbf{u}}
\newcommand\bv{\mathbf{v}}
\newcommand\bw{\mathbf{w}}
\newcommand\bnull{\mathbf{0}}
\def\cB{{\mathcal B}}
\def\cE{{\mathcal E}}
\def\cF{{\mathcal F}}
\def\fS{{\mathfrak S}}
\def\sC{{\mathscr{C}}}
\def\sL{{\mathscr{L}}}
\def\rd{{\mathrm{d}}}
\def\re{{\mathrm{e}}}
\def\ri{{\mathrm{i}}}
\newcommand\1{{\ensuremath {\mathds 1} }}
\def\bra{\langle}
\def\ket{\rangle}
\def\Tr{{\rm Tr} }
\def\Ker{{\rm Ker }}
\def\bnull{{\mathbf 0}}
\def\G{\Gamma}
\newcommand{\msc}[1]{\href{https://mathscinet.ams.org/mathscinet/search/mscdoc.html?code=#1}{#1}}
\newcommand{\myfootnote}[1]{
    \renewcommand{\thefootnote}{}
    \footnotetext{\scriptsize#1}
    \renewcommand{\thefootnote}{\arabic{footnote}}
}
\title[Exponential decay of critical points]{Exponential decay of the critical points in a discrete model of polyacetylene}
\author{David Gontier \qquad Adechola E. K. Kouande \qquad Éric Séré}
\date{\today}
\begin{document}
    \myfootnote{David Gontier: CEREMADE, Université Paris-Dauphine, PSL University,75016 Paris, France \& ENS/PSL University, DMA, F-75005, Paris, France;\\
        email: \href{gontier@ceremade.dauphine.fr}{gontier@ceremade.dauphine.fr}}
    \myfootnote{Adéchola E. K. Kouande: CEREMADE, University of Paris-Dauphine, PSL University, 75016 Paris, France\\
        email: \href{kouande@ceremade.dauphine.fr}{kouande@ceremade.dauphine.fr}}
    \myfootnote{Éric Séré: CEREMADE, University of Paris-Dauphine, PSL University, 75016 Paris, France\\
        email: \href{sere@ceremade.dauphine.fr}{sere@ceremade.dauphine.fr}}
    
    \begin{abstract}
        In this paper we consider stationary states of the SSH model for infinite polyacetylene chains that are homoclinic or heteroclinic connections between two-periodic dimerized states. We prove that such connections converge exponentially fast to the corresponding asymptotic periodic states. 
        
        \bigskip
        \noindent \sl \copyright~2024 by the authors. This paper may be reproduced, in its entirety, for non-commercial purposes.
    \end{abstract}
    \keywords{SSH model; critical point; heteroclinic; edge modes; topological defect}

\subjclass[2020]{\msc{35Q40}, \msc{70K44}, \msc{81V74}.}

    \maketitle    
    
    \tableofcontents


    \section{Introduction}
    
    The goal of this article is to prove an exponential decay property for critical points in the SSH energy. The SSH model was introduced by Su, Schrieffer and Heeger to describe polyacetylene, which is a long one-dimensional chain of carbon (and hydrogen) atoms. In this model, the energy may be written as a function of a real sequence ${\bf t}=(t_n)_{n\in I}$. Each number $t_n$ represents an electronic hopping amplitude between the $n$-th atom and the $(n+1)$-th atom. It is related to their distance $d_n$ by an affine relation of the form $t_n=a - b d_n$. The index set $I$ is $\mathbb{Z}/L\mathbb{Z}$ for a closed chain with $L$ carbon atoms, and $I=\mathbb{Z}$ for an infinite chain. It is a well known fact that the chain can lower its total energy by dimerizing. This physical phenomenon was first predicted by Peierls~\cite{Pei96} (see also~\cite{Fro54}) and is now known as the Peierls distortion or dimerization.
    
    Actually, Kennedy and Lieb~\cite{KenLie04}, and Lieb and Nachtergaele~\cite{LieNac95} proved that the minimizers of the SSH energy associated to closed polyacetylene with an even number $L$ of carbon atoms at zero temperature are always $2$-periodic. When $L \equiv 2 \ {\rm mod} \ 4$ or when $L \equiv 0 \ {\rm mod} \ 4$ is large enough, these minimizers are dimerized, in the sense that they are $2$-periodic but not $1$ periodic. In this situation, there are exactly two minimizing configurations of the energy, that we call $\bt^-$ and $\bt^+$. They are of the form
    \begin{equation}\label{dimerized_config}
        t^-_n := W - (-1)^n \delta, \quad \text{and} \quad 
        t^+_n := W + (-1)^n \delta.
    \end{equation}
    The quantity $W > 0$ is the average hopping amplitude between neighbouring carbon atoms, and $\delta > 0$ is the distortion. At zero temperature this distortion phenomenon remains in the thermodynamic limit, that is when the number of carbon atoms $L$ goes to infinity. However, we proved in \cite{GonKouSer23} that the distortion disappears at a sufficiently high temperature. More precisely there is a critical temperature below which the polyacetylene chain is dimerized ($\delta > 0$) and behaves as an insulator, and above which it is 1-periodic ($\delta = 0$) and behaves as a metal.

    \medskip
    
    In this paper, we study a family of critical points, namely homoclinic and heteroclinic critical points, for the infinite SSH model at zero temperature.
    Using the terminology of dynamical systems, a configuration $\bt = (t_n)_{n \in \Z}$ is said to be {\bf homoclinic} if
    \begin{equation}\label{homoclinic_conf}
        \lim_{n \to - \infty} | t_n - t_n^+ | =  
        \lim_{n \to + \infty} | t_n - t_n^+ | = 0,
    \end{equation}
    and it is said to be {\bf heteroclinic} if
    \begin{equation}\label{heteroclinic_conf}
        \lim_{n \to - \infty} | t_n - t_n^- | =  
        \lim_{n \to + \infty} | t_n - t_n^+ | = 0.
    \end{equation}
    In other words, a configuration is homoclinic if it converges to the same dimerized state $\bt^+$ at infinity, and it is heteroclinic if it switches from the dimerized state $\bt^-$ at $-\infty$ to the dimerized state $\bt^+$ at $+ \infty$. Of course, one can also consider homoclinic and heteroclinic configurations which converge to $\bt^-$ at $+ \infty$. But they can be obtained from the other ones by the shift $n\mapsto n+1$.
    
    \medskip
    
    The expression ``critical points'' refers to solutions of the Euler-Lagrange equation associated with the SSH energy functional. The fact that heteroclinic critical points exist was suggested in~\cite{SuSchHee79, SuSchHee80, KivHei82, Su80, VosAalSaa96}. A rigorous proof was given in \cite{Gar11, GarSer12} and such configurations are sometimes called {\em kinks}.
    
    \medskip
    
    Let us mention that the discrete model has a continuous analogue, first derived by Takayama, Lin-Liu and Maki~\cite{TakLinMak80} in the context of polyacetylene. It turns out that this continuous model coincides with the so-called Gross--Neveu model~\cite{GroNev74}, which has explicit kink critical points as well as explicit homoclinic critical points, that have been completely classified~\cite{Fei95, Fei04}. These explicit solutions converge exponentially fast to the stable states when the one-dimensional space variable converges to $\pm \infty$.

    \medskip
    
  Our main goal is to prove that homoclinic and heteroclinic critical points of the discrete infinite SSH model converge exponentially fast to the corresponding dimerized states. We do not prove that this convergence rate is optimal, but we are convinced that it is, as we expect a behaviour similar to the continuum model. Numerical simulations also seem to confirm this.
  
  \medskip
  
  To the best of our knowledge, there is no rigorous result relating the discrete and continuum models. It would be interesting to prove that critical points of the continuum model approximate critical points of the discrete model in a certain regime of parameters. It would also be interesting to study the interaction between kinks in the discrete model. We see the present work as a preliminary step in these directions.

    
    \section{Critical points for the infinite SSH model, and main result}
    \label{sec:critical_point_contour}
    
    \subsection{The SSH model}
    \label{ssec:Peierls}
    
    In order to state our theorem, we first need to define the notion of critical points for our infinite SSH model. Let us start with the SSH model for a periodic linear chain with $L$ classical atoms, together with quantum non-interacting electrons. We will take the limit $L \to \infty$ at the end. 
    
    We denote by $t_n$ the hopping amplitude between the $n$-th and $(n + 1)$-th atoms (modulo $L$), and set $\bt:=~( t_n)_{n\in\mathbb{Z}/L\mathbb{Z}}$. In what follows, we only consider the case where all hopping amplitudes are strictly positive, and we sometimes write $\bt \ge \tau > 0$ as a shortcut for $t_n \ge \tau$ for all $n$. Note that the dimerized configurations automatically satisfy this condition. Indeed, in \cite{KenLie04}, it is proved that $\delta < W$, so $t_n \geq \tau$, with $\tau =  W -\delta > 0$.
    
    \medskip
    
    The SSH energy of the closed chain of length $L$ is (see~\cite{KenLie04, LieSchMat61, MacNac96,Pei96, SuSchHee79, GonKouSer23})
    \begin{equation} \label{eq:def:Peierls_finite}
        \cE^{(L)}( \bt ) := \frac{\mu}{2} \sum_{n\in\mathbb{Z}/L\mathbb{Z}}^{}(t_n - 1)^2 - 2 \Tr( T_- ),
    \end{equation}
    where $T = T( \bt )$ is the $L \times L$ hermitian matrix
    \begin{equation}\label{CH1_3}
        T = T( \bt ) := \begin{pmatrix}
            0      & t_1    & 0      & 0         & \cdots   & t_{L} \\
            t_1    & 0      & t_2    & \cdots    & 0        & 0       \\
            0      & t_2    & 0      & t_3       & \cdots   & 0        \\
            \vdots & \vdots & \vdots & \ddots    & \vdots   & \vdots    \\
            0      & 0      & \cdots &  t_{L-2} & 0        & t_{L-1}   \\
            t_{L} & 0      & \cdots & 0         & t_{L-1} & 0
        \end{pmatrix},
    \end{equation}
    and where we set $T_- = -T \1(T < 0)$. Here, $\1(T < 0)$ is the negative spectral projector of the self-adjoint matrix $T$, that is, the orthogonal projector on the direct sum of its negative eigenspaces. The first term in~\eqref{eq:def:Peierls_finite} is the distortion energy of the atoms: this energy depends quadratically on the distances $d_n$ between successive atoms, but these distances are themselves affine functions on the amplitudes $t_n$. The parameter $\mu > 0$ is the rigidity of the chain, and our units are such that the jump amplitude between two atoms is $1$ when their distortion energy is minimal. The second term in~\eqref{eq:def:Peierls_finite} models the electronic energy of the valence electrons under the Hamiltonian $T$. It results from the identity
    \begin{equation*} 
        \min_{0 \le \gamma = \gamma^* \le 1} 2 \Tr \left( T \gamma  \right) =  2 \Tr \left( T \1 (T < 0 )\right) = - 2 \Tr (T_-).
    \end{equation*}
    The minimization on the left-hand side was performed for all one-body density matrices $\gamma$ representing non-interacting electrons. The condition $0 \le \gamma \le 1$ is the Pauli principle, and the $2$ factor stands for the spin.
    \medskip
    
    We say that $\bt$ is a critical point of $\cE^{(L)}$ when one has $\cE^{(L)}(\bt+\bh) = \cE^{(L)}(\bt)+o(\Vert \bh\Vert)$.

    \subsection{The SSH energy difference}
    \label{ssec:Ft}
    
    Let us fix a configuration $\bt$, and consider the energy difference functional $\cF_\bt^{(L)}$, defined by
    \begin{equation}\label{Delta_Energy_t}
        \cF_\bt^{(L)}(\bh):= \cE^{(L)}(\bt+\bh) - \cE^{(L)}(\bt)= \frac{\mu}{2}\sum_{n\in\mathbb{Z}/L\mathbb{Z}}(h_n + 2t_n - 2)h_n - 2 \Tr( (T + H)_- - T_-),
    \end{equation}
    where $T = T(\bt)$ and $H = T(\bh)$ are the hermitian matrices constructed from $\bt$ and $\bh$ respectively. Clearly, $\bt $ is a critical point of $\cE^{(L)}$ iff $\bnull$ is a critical point of $\cF_\bt^{(L)}$. Substracting the quantity $\cE^{(L)} (\bt)$ allows us to define an analogous finite energy difference for infinite chains: for two bounded sequences $\bt : \Z \to \R^+$ and $\bh : \Z \to \R$, assuming that $h\in\ell^1(\Z,\R)$ and that $(T + H)_- - T_-$ is trace-class as an operator acting on $\ell^2(\Z,\C)$, we set
    \begin{equation} \label{eq:def:F}
        \boxed{ \cF_\bt(\bh):= \frac{\mu}{2}\sum_{n \in \Z} (h_n + 2t_n - 2)h_n - 2 \Tr( (T + H)_- - T_- ) } \ .
    \end{equation}
    Now, the operator $T := T(\bt) $ (and similarly for $T+H$) is acting on the infinite dimensional Hilbert space $\ell^2(\Z,\C)$. Its coefficients in the canonical basis are
    \[
    \forall n \in \Z, \quad T_{n, n+1} = T_{n+1, n}= t_n, \qquad T_{i,j} = 0 \quad \text{if $| i - j | \neq 1$}.
    \]
    In what follows, we denote by bold letters $\ba, \bt, \bh, \bu, ...$ sequences from $\Z$ to $\R$, and by capital letters $A, T, H, U, ...$ the corresponding operators acting on $\ell^2(\Z)$.
    
    The fact that the map $\cF_\bt$ is well defined when $\bt$ is a homoclinic or heteroclinic configuration is given in the next two lemmas (see Section~\ref{ssec:homoclinic_heteroclinic} for the proof).
    
    \begin{lemma} \label{lem:existence_contour}
        Let $\bt$ be a homoclinic or heteroclinic configuration such that $\bt \ge \tau$ for some $\tau > 0$. Then there is a positively oriented contour $\sC$ in the complex plane, a constant $C \ge 0$ and a constant $\eta > 0$ so that, for all $\bh$ with $ \| \bh \|_{\ell^\infty} \le  \eta$ and for all $z \in \sC$, the operator $(z - (T + H))$ is invertible with $\| (z - (T + H))^{-1} \|_{\rm op} \le C$ and
        \[
        -(T + H)_- = \frac{1}{2 \ri \pi} \oint_\sC \dfrac{z}{z - (T + H)} \rd z.
        \]
    \end{lemma}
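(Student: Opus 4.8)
The plan is to exhibit a fixed contour $\sC$ that encircles the negative part of the spectrum of $T = T(\{\bt\})$, uniformly separated from $0$ and from $-\infty$, and then to show that a small perturbation $H$ with $\|\bh\|_{\ell^\infty}$ small does not move the spectrum across $\sC$, so that the Riesz projection formula applies to $T+H$ as well.

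First I would locate the spectrum of $T$ itself. Since $\bt \ge \tau > 0$ and $\bt$ is bounded, $T$ is a bounded self-adjoint operator with $\|T\|_{\mathrm{op}} \le 2 \|\bt\|_{\ell^\infty} =: M$, so $\sigma(T) \subset [-M, M]$. The crucial point is a \emph{spectral gap around $0$}: because $\bt$ is homoclinic or heteroclinic, $T$ is a compact (indeed trace-class, by Lemma~\ref{lem:existence_contour}'s companion or by the decay to be proved) perturbation of one of the $2$-periodic operators $T^\pm = T(\{\bt^\pm\})$, each of which, being $2$-periodic with $t_n \in \{W-\delta, W+\delta\}$, has spectrum $\{\pm\sqrt{4W^2\cos^2\theta + 4\delta^2\sin^2\theta} : \theta \in [0,\pi]\}$ via Bloch--Floquet analysis; this is $\pm[2\sqrt{W^2-\delta^2}, 2W] = \pm[2z, 2W]$ with a gap $(-2z, 2z)$ containing $0$. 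By Weyl's theorem the essential spectrum of $T$ equals that of $T^\pm$ (for the heteroclinic case one uses that $T$ agrees with $T^-$ near $-\infty$ and $T^+$ near $+\infty$, so $\sigma_{\mathrm{ess}}(T) = \sigma(T^-) \cup \sigma(T^+) = \pm[2z,2W]$), so the only possible spectrum in $(-2z,2z)$ consists of isolated eigenvalues of finite multiplicity. A chiral (sublattice) symmetry argument — $T$ anticommutes with the diagonal operator $\Pi$ with $\Pi_{nn} = (-1)^n$ — shows the spectrum is symmetric about $0$; one should either check $0 \notin \sigma(T)$ directly (the kink is known to have a mid-gap state, so one may need to take the contour to enclose the strictly negative eigenvalues and pass through $0$ only if $0$ is not an eigenvalue — actually for the energy one wants the projection onto $\{T<0\}$, and $T_- = -T\1(T<0)$ vanishes at $0$, so a contour through $0$ is harmless provided $0$ is isolated). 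Concretely I would choose $\sC$ to be a rectangle (or two small circles) enclosing exactly $\sigma(T) \cap (-\infty,0)$, crossing the real axis at some point $-a \in (-2z, 0)$ strictly between the top of the negative essential spectrum $-2z$ and $0$ (or at $0$ itself if $0\notin\sigma(T)$) and at a point $-b < -M$, staying at distance $\ge d_0 > 0$ from $\sigma(T)$ for some $d_0 > 0$.

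Next, given such a $\sC$ with $\mathrm{dist}(\sC, \sigma(T)) \ge d_0$, I would use the second resolvent identity: for $z \in \sC$, $z - (T+H) = (z-T)\big(I - (z-T)^{-1}H\big)$, and $\|(z-T)^{-1}\|_{\mathrm{op}} = \mathrm{dist}(z,\sigma(T))^{-1} \le d_0^{-1}$ since $T$ is self-adjoint. If $\|H\|_{\mathrm{op}} \le 2\|\bh\|_{\ell^\infty} \le 2\eta < d_0$, a Neumann series gives invertibility of $I - (z-T)^{-1}H$ with inverse bounded by $(1 - d_0^{-1}\|H\|_{\mathrm{op}})^{-1}$, hence $z - (T+H)$ is invertible with $\|(z-(T+H))^{-1}\|_{\mathrm{op}} \le C := d_0^{-1}(1 - 2\eta d_0^{-1})^{-1}$; choosing $\eta := d_0/4$ gives $C = 2/d_0$. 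This settles the first assertion. For the identity, note that $T+H$ is self-adjoint with $\sigma(T+H)$ still at distance $\ge d_0 - \|H\|_{\mathrm{op}} > 0$ from $\sC$, so $\sC$ encircles $\sigma(T+H) \cap (-\infty, \alpha)$ for the appropriate crossing level; the holomorphic functional calculus for the function $f(z) = z\,\1(z<0)$ — more precisely, Cauchy's formula $\frac{1}{2\pi\ri}\oint_\sC \frac{z}{z-A}\,\rd z = A\,P_-$, where $P_- = \1(A<0)$ is the spectral projection of the self-adjoint operator $A = T+H$ onto the part of its spectrum enclosed by $\sC$ — yields $\frac{1}{2\pi\ri}\oint_\sC \frac{z}{z-(T+H)}\,\rd z = (T+H)\1(T+H<0) = -(T+H)_-$, as claimed. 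One small subtlety: the interior of $\sC$ must contain exactly the negative spectrum and no point of $[0,\infty)\cap\sigma(T+H)$; this is guaranteed by the gap because the perturbation $2\eta < d_0 \le 2z$ is too small to push any nonnegative spectral point below the crossing level $-a$, and too small to push negative essential spectrum above $-a$ as well — here I would just require $\eta$ small enough that $2\eta < \min(a, 2z - a, d_0)$.

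The main obstacle is the spectral gap / Weyl argument: one must rigorously justify that $\sigma_{\mathrm{ess}}(T) = \pm[2z, 2W]$ for homoclinic \emph{and} heteroclinic $\bt$. For the homoclinic case this is a genuine compact (trace-class) perturbation of $T^+$, so Weyl applies cleanly. For the heteroclinic case $T$ is not a compact perturbation of a single periodic operator; instead one splits $\ell^2(\Z) = \ell^2(\Z_{<0}) \oplus \ell^2(\Z_{\ge 0})$ (up to a rank-one coupling term), compares with $T^- \oplus T^+$, and invokes stability of the essential spectrum under relatively compact perturbations together with the fact that the decay of $\bt$ to $\bt^\pm$ — which a priori is only known to hold (from the homoclinic/heteroclinic hypothesis) in the sense $t_n - t_n^\pm \to 0$ — suffices for the difference to be a norm-limit of finite-rank operators, hence compact; then $\sigma_{\mathrm{ess}}(T) = \sigma_{\mathrm{ess}}(T^-) \cup \sigma_{\mathrm{ess}}(T^+)$. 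The remaining care is to ensure that $0$ (or the chosen real crossing point $-a$) is not in $\sigma(T)$; since $0$ lies in the essential gap, any spectrum there is discrete, and one simply picks $-a$ in the resolvent set between $0$ and the (finite set of) negative eigenvalues — such an $a$ exists by discreteness. Once the contour is fixed, everything else is the routine resolvent estimate above.
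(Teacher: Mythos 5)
Your overall structure (Bloch–Floquet gap, Weyl's theorem via a half-line decomposition, Riesz projection plus resolvent perturbation) matches the paper's approach, but there is a genuine gap in the heteroclinic case that the paper closes with an ingredient you are missing.

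\textbf{The gap.} You need the contour $\sC$, with real crossing at some $-a<0$, to enclose \emph{exactly} $\sigma(T+H)\cap(-\infty,0)$, for every admissible $H$. Your argument ensures that the perturbation cannot push nonnegative spectrum below $-a$, nor negative spectrum above $-a$; but it does not rule out that the isolated zero eigenvalue of $T$ (which you acknowledge exists in the heteroclinic case — the mid-gap / kink state) gets perturbed to a small \emph{strictly negative} value $\lambda_0(T+H)\in(-a,0)$. Such a point would lie outside $\sC$ while contributing to $(T+H)\mathds{1}(T+H<0)$, and the claimed identity $-(T+H)_- = \frac{1}{2\pi\ri}\oint_\sC\frac{z}{z-(T+H)}\,\rd z$ would fail. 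Writing ``$P_-=\mathds{1}(A<0)$ is the spectral projection of $A=T+H$ onto the part of its spectrum enclosed by $\sC$'' assumes the very coincidence that is in doubt. The paper's resolution is that the zero mode is \emph{rigid}: Lemma~\ref{lem:zero_mode} shows via the explicit recurrence $t_n\psi_n + t_{n+1}\psi_{n+2}=0$ that for any configuration satisfying the heteroclinic ratio condition, $0$ is an eigenvalue of multiplicity exactly one; and for $\|\bh\|_{\ell^\infty}$ small enough the perturbed configuration $\bt+\bh$ still satisfies that ratio condition, so $\lambda_0(T+H)=0$ \emph{exactly}, not merely $O(\|H\|)$. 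With the zero eigenvalue pinned at $0$, the contour crossing at $-g/2$ misses nothing. A parallel issue arises in the homoclinic case: you never actually verify $0\notin\sigma(T)$, and if it were an eigenvalue the same slippage into $(-a,0)$ could occur; the paper proves $0\notin\sigma(T)$ directly, again via Lemma~\ref{lem:zero_mode}, and then picks $\eta = g/8$ so that the perturbed spectrum stays outside $[-g/2,g/2]$.

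\textbf{Minor point.} Your Bloch formula $\pm\sqrt{4W^2\cos^2\theta + 4\delta^2\sin^2\theta}$ is correct, but its range over $\theta\in[0,\pi]$ is $\pm[2\delta,2W]$ (minimum at $\theta=\pi/2$), not $\pm[2\sqrt{W^2-\delta^2},2W]$. The gap exists either way since $\delta>0$, but the correct gap half-width is $2\delta$.
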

    The contour $\sC$ is independent of $\bh$, but depends on $\bt$. This Lemma might be surprising, as the energy $0$ can be in the spectrum of $T+H$. Actually, we will prove the following:
    \begin{itemize}
        \item If $\bt$ is a homoclinic configuration with $\bt \ge \tau > 0$, then $0$ is never in the spectrum of $T +H$, for $\bh$ small enough. 
        \item If $\bt$ is a heteroclinic configuration with $\bt \ge \tau > 0$, then $0$ is always an isolated eigenvalue of $T+H$ of multiplicity $1$, for all $\bh$ small enough.
    \end{itemize}
    
    In both cases, one can choose a contour $\sC$ of the form (see Figure~\ref{fig:contours})
    \begin{equation} \label{eq:def:sC}
        \sC :=  (\Sigma + \ri) \to (\Sigma - \ri) \to (-g/2 - \ri) \to (-g/2 + \ri) \to (\Sigma + \ri),
    \end{equation}
    where $\Sigma$ is a negative enough number, and where $g = {\rm dist}(0, \sigma(T) \setminus \{ 0 \} )$ is the distance between $0$ and the (rest of the) spectrum.
    
    \medskip
    
    \begin{figure}[!ht]
        \centering
        \begin{subfigure}[t]{0.45\textwidth}
            \centering
            
            \begin{tikzpicture}[scale=0.6]
                
                \draw (-6,0)--(5.5,0);
                \filldraw[red]{} (-5.5,0) circle[radius = 0.07] node[left=0.3cm, above]{$\Sigma$};
                \filldraw[fill=blue!50] (-4.5,0.1) node[below=0.2cm]{$-2W$}rectangle(-2,-0.1) node[below=0.1cm]{$-2\delta$};
                \filldraw[fill=blue!50]  (-1.57,0) circle[radius=0.07];
                \filldraw[fill=blue!50](-1.03,0) circle[radius=0.07];
                \draw (0,-0.1) -- (0, 0.1) node[below=0.2cm]{$0$};
                \filldraw[red]{} (-0.5,0) circle[radius = 0.07] node[above=0.25cm, left=0.04cm]{$-\frac{g}{2}$};
                
                
                \coordinate (A) at (-5.5,0); 
                \coordinate (B) at (-5.5,-1); 
                \coordinate (C) at (-0.5,-1); 
                \coordinate (D) at (-0.5,1); 
                \coordinate (E) at (-5.5,1) ; 

                \draw [red](A) -- (B) -- (C) -- (D) -- (E) node[above , midway]{$\sC$} -- cycle ; 
                \draw[red, ->,>=stealth] (A) -- (-5.5,-0.5);
                \draw[red,->,>=stealth] (B) -- (-2,-1);
                \draw[red,->,>=stealth] (C) -- (-0.5,0.5);
                \draw[red,->,>=stealth] (D) -- (-3,1);

                
                \filldraw[fill=blue!50] (1.03,0)   circle[radius=0.07];  
                \filldraw[fill=blue!50] (1.57,0)  circle[radius=0.07];
                
                \filldraw[fill=blue!50] (2,-0.1) node[below=0.1cm]{$2\delta$} rectangle(4.5,0.1) node[below=0.2cm]{$2W$};
                \draw[->,>=stealth] (5.2,0)--(5.5,0)node[above]{$\sigma(T)$};
            \end{tikzpicture}
            
        \end{subfigure}
        \hfill
        \begin{subfigure}[t]{0.45\textwidth}
            \centering
            
            \begin{tikzpicture}[scale=0.6]
                
                \draw (-6,0)--(5.5,0);
                \filldraw[red]{} (-5.5,0) circle[radius = 0.07] node[left=0.3cm, above]{$\Sigma$};
                \filldraw[fill=blue!50] (-4.5,0.1) node[below=0.2cm]{$-2W$}rectangle(-2,-0.1) node[below=0.1cm]{$-2\delta$};
                \filldraw[fill=blue!50]  (-1.57,0) circle[radius=0.07];
                \filldraw[fill=blue!50](-1.03,0) circle[radius=0.07];
                \filldraw[fill=blue!50]  (0,0) circle[radius=0.07] node[below=0.2cm]{$0$};
                \filldraw[red]{} (-0.5,0) circle[radius = 0.07] node[above=0.25cm, left=0.04cm]{$-\frac{g}{2}$};
                
                
                \coordinate (A) at (-5.5,0); 
                \coordinate (B) at (-5.5,-1); 
                \coordinate (C) at (-0.5,-1); 
                \coordinate (D) at (-0.5,1); 
                \coordinate (E) at (-5.5,1) ; 

                \draw [red](A) -- (B) -- (C) -- (D) -- (E) node[above , midway]{$\sC$} -- cycle ; 
                \draw[red, ->,>=stealth] (A) -- (-5.5,-0.5);
                \draw[red,->,>=stealth] (B) -- (-2,-1);
                \draw[red,->,>=stealth] (C) -- (-0.5,0.5);
                \draw[red,->,>=stealth] (D) -- (-3,1);

                
                \filldraw[fill=blue!50] (1.03,0)   circle[radius=0.07];  
                \filldraw[fill=blue!50] (1.57,0)  circle[radius=0.07];
                
                \filldraw[fill=blue!50] (2,-0.1) node[below=0.1cm]{$2\delta$} rectangle(4.5,0.1) node[below=0.2cm]{$2W$};
                \draw[->,>=stealth] (5.2,0)--(5.5,0)node[above]{$\sigma(T)$};
            \end{tikzpicture}
        \end{subfigure}
        \caption{Contours used for the Cauchy integral, for a homoclinic configuration (Left), and a heteroclinic configuration (Right). The main difference is that $0$ is in the spectrum in the heteroclinic case. We prove below that $\sigma_{\rm ess}(T) = [-2W, -2\delta] \cup [2 \delta, 2W]$, and that the spectrum of $T$ is symmetric with respect to $0$.}
        \label{fig:contours}
    \end{figure}
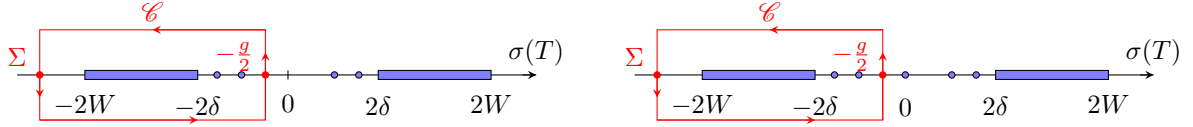

    \medskip

    In the heteroclinic situation, $0$ is a stable (or topologically protected) eigenvalue: it is unperturbed by the addition of $H$. Actually, any $T$ matrix coming from a heteroclinic configuration can be seen as a junction between two SSH chains with different indices~\cite{SuSchHee79, GonMonPer22, Kit09}.
    
    \medskip
    
    This Lemma allows to prove that $\cF_{\bt}$ is well-defined and smooth around $\{ \bnull \}$. We refer to Section~\ref{ssec:TaylorF} for the proof of the following result.
    \begin{lemma} \label{lem:F_is_smooth}
        Let $\bt$ be a homoclinic or heteroclinic configuration with $\bt \ge \tau$ for some $\tau > 0$, and let $\eta > 0$ and $\sC$ be a contour as in Lemma~\ref{lem:existence_contour}. The map $\bh \mapsto \cF_\bt(\bh)$ is well-defined and $C^\infty$ on $\{ \bh, \| \bh \|_{\ell_1} < \eta \}$. 
        In addition, there is $C \ge 0$ so that, for all $\bh$ with $\| \bh \|_{\ell^1} < \eta$, we have
        \[
        \left| \cF_{\bt}(\bh) - L_{\bt}(\bh) - \frac12 H_\bt(\bh, \bh) \right| \le C \| \bh \|_{\ell^2}^3,
        \]
        where $L_{\bt}$ (differential of $\cF_\bt$) is the continuous linear form on $\ell^1(\Z)$ defined by (we set $\Gamma_\bt := \1(T < 0)$ the negative spectral projector of $T$)
        \[
        L_{\bt}(\bh) := \mu \sum_{n \in \Z} (t_n - 1) h_n + 2 \Tr \left( \Gamma_\bt H \right), 
        \]
        and $H_\bt$ (hessian of $\cF_\bt$) is the bilinear form on $\ell^1(\Z)$ defined by
        \begin{equation} \label{eq:def:H}
            H_\bt(\bh, \bk) := \mu \sum_{n\in \Z} h_n k_n + 2 \Tr \left( \frac{1}{2 \ri \pi} \oint_{\sC}  H \dfrac{1}{z - T} K \dfrac{1}{z - T} \rd z \right) .
        \end{equation}
        In addition, the bilinear map $H_\bt$ can be extended continuously as a bilinear map on $\ell^2(\Z)$.
    \end{lemma}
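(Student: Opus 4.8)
The plan is to split $\cF_\bt(\bh)=P(\bh)+\Phi(\bh)$, where
\[
  P(\bh):=\frac{\mu}{2}\sum_{n\in\Z}(h_n+2t_n-2)h_n=\mu\sum_{n\in\Z}(t_n-1)h_n+\frac{\mu}{2}\|\bh\|_{\ell^2}^2
\]
is an \emph{exact} polynomial of degree $2$ (continuous on $\ell^1(\Z)$, and even on $\ell^2(\Z)$), whose linear and quadratic parts are precisely the contributions of the first sum to $L_\bt$ and to $\tfrac12 H_\bt$, while
\[
  \Phi(\bh):=-2\,\Tr\big((T+H)_- - T_-\big).
\]
There is nothing to do for $P$, so everything concerns $\Phi$. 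For this I use Lemma~\ref{lem:existence_contour}: for $\|\bh\|_{\ell^1}\le\eta$ the resolvents $R(z):=(z-T)^{-1}$ and $R_H(z):=(z-(T+H))^{-1}$ are bounded by a constant $C$ uniformly on $\sC$, and
\[
  (T+H)_- - T_- = -\frac{1}{2\ri\pi}\oint_\sC z\,\big(R_H(z)-R(z)\big)\,\rd z,\qquad
  \text{so}\qquad \Phi(\bh)=\frac{1}{\ri\pi}\oint_\sC z\,\Tr\big(R_H(z)-R(z)\big)\,\rd z .
\]

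\emph{Regularity.} The map $\bh\mapsto H=H(\bh)$ is bounded linear from $\ell^1(\Z)$ into the trace class $\mathfrak S_1$, with $\|H\|_{\mathfrak S_1}\le 2\|\bh\|_{\ell^1}$, from $\ell^2(\Z)$ into the Hilbert--Schmidt class $\mathfrak S_2$, with $\|H\|_{\mathfrak S_2}=\sqrt2\,\|\bh\|_{\ell^2}$, and satisfies $\|H\|_{\rm op}\le 2\|\bh\|_{\ell^\infty}$. On $\{\|\bh\|_{\ell^1}\le\eta\}$ the operator $z-(T+H)$ is invertible with bounded inverse uniformly in $z\in\sC$, so $\bh\mapsto R_H(z)$ is $C^\infty$ (indeed real-analytic) there, uniformly in $z\in\sC$. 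Since $R_H-R=R_HHR$ is a product of a smooth bounded-operator factor, a trace-class factor and a bounded factor, $\bh\mapsto(T+H)_- - T_-$ is $C^\infty$ into $\mathfrak S_1$; composing with the trace and adding $P$ shows that $\cF_\bt$ is $C^\infty$ on $\{\|\bh\|_{\ell^1}\le\eta\}$.

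\emph{Second-order expansion.} Iterating the resolvent identity $R_H=R+RHR_H$ three times gives the purely algebraic identity $R_H=R+RHR+RHRHR+(RH)^3R_H$. Inserting it into the Cauchy formula for $\Phi$, and noting that every term is trace class, we obtain $\Phi=\Phi_1+\Phi_2+\Phi_{\ge3}$ with
\[
  \Phi_1(\bh)=\frac{1}{\ri\pi}\oint_\sC z\,\Tr\!\big(R^2H\big)\,\rd z,\qquad
  \Phi_2(\bh)=\frac{1}{\ri\pi}\oint_\sC z\,\Tr\!\big(RHRHR\big)\,\rd z,
\]
and $\Phi_{\ge3}(\bh)=\tfrac1{\ri\pi}\oint_\sC z\,\Tr\big((RH)^3R_H\big)\,\rd z$. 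Using $\tfrac{\rd}{\rd z}R=-R^2$, cyclicity of the trace, and integration by parts along the closed contour $\sC$, one gets $\oint_\sC zR^2\,\rd z=\oint_\sC R\,\rd z=2\ri\pi\,\Gamma_\bt$ (since $\sC$ encircles exactly $\sigma(T)\cap(-\infty,0)$, see Figure~\ref{fig:contours}), hence $\Phi_1(\bh)=2\Tr(\Gamma_\bt H)$; likewise $\tfrac{\rd}{\rd z}\Tr(RHRH)=-2\Tr(R^2HRH)$ together with another integration by parts gives $\oint_\sC z\,\Tr(RHRHR)\,\rd z=\tfrac12\oint_\sC\Tr(HRHR)\,\rd z$, hence $\Phi_2(\bh)=\Tr\big(\tfrac1{2\ri\pi}\oint_\sC HRHR\,\rd z\big)$. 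Adding the linear and quadratic parts of $P$, this identifies $L_\bt$ and $\tfrac12 H_\bt$ with the stated formulas, and in particular $\cF_\bt(\bh)-L_\bt(\bh)-\tfrac12 H_\bt(\bh,\bh)=\Phi_{\ge3}(\bh)$.

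\emph{The $\ell^2$ remainder (main point).} It remains to bound $\Phi_{\ge3}(\bh)$ by $C\|\bh\|_{\ell^2}^3$. The naive Taylor estimate only yields $O(\|\bh\|_{\ell^1}^3)$; the improvement comes from spending factors $H$ in Hilbert--Schmidt rather than trace norm. In $(RH)^3R_H$ there are three factors $H$: estimating two of them in $\mathfrak S_2$-norm ($\le\sqrt2\,\|\bh\|_{\ell^2}$), the third in operator norm ($\le 2\|\bh\|_{\ell^\infty}\le 2\|\bh\|_{\ell^2}$), and the four resolvents in operator norm ($\le C$), Hölder's inequality for Schatten norms gives $\sup_{z\in\sC}\|(RH)^3R_H\|_{\mathfrak S_1}\le C'\|\bh\|_{\ell^2}^3$; integrating over the bounded, $\bh$-independent contour $\sC$ yields $|\Phi_{\ge3}(\bh)|\le C''\|\bh\|_{\ell^2}^3$. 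The same Hölder argument applied to $\oint_\sC HRKR\,\rd z$ gives $|H_\bt(\bh,\bk)|\le C'''\|\bh\|_{\ell^2}\|\bk\|_{\ell^2}$, so $H_\bt$ extends to a bounded bilinear form on $\ell^2(\Z)$ by density of $\ell^1$ in $\ell^2$. I expect this last point --- converting the cubic remainder and the Hessian bound from $\ell^1$ to $\ell^2$ by carefully distributing the factors $H$ between two Hilbert--Schmidt slots and otherwise operator norms --- to be the only genuinely delicate step; everything else is bookkeeping with Cauchy integrals and the resolvent identity.
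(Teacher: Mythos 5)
Your proof is correct and follows essentially the same route as the paper: split off the explicit quadratic polynomial, write the electronic part via the Cauchy formula on the contour $\sC$ of Lemma~\ref{lem:existence_contour}, expand the resolvent, identify $L_\bt$ and $H_\bt$ by contour integration by parts (the paper phrases this as the identity $\oint_\sC z R^2\,\rd z=\oint_\sC R\,\rd z=2\ri\pi\Gamma_\bt$ and its $K$-derivative), and bound the cubic tail via Schatten--Hölder. The only cosmetic difference is in the remainder estimate: you split the three $H$ factors as $\fS_2\cdot\fS_2\cdot\cB$, while the paper uses three $\fS_3$ factors and then $\|\bh\|_{\ell^3}\le\|\bh\|_{\ell^2}$; both yield the required $\|\bh\|_{\ell^2}^3$ bound.
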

    
    \subsection{Critical points for the infinite SSH model, and main result}
    \label{ssec:criticalPoints}
    
    We can now define the notion of critical points for the infinite SSH model.
    
    \begin{definition}
        Let $\bt$ be a homoclinic or heteroclinic configuration such that $\bt \ge \tau$ for some $\tau > 0$. We say that $\bt$ is a critical point if $L_\bt$ is the null map. Equivalently, using that
        \[
        \Tr(\Gamma_{\bt}H) = \sum_{n \in \Z} h_n \left[ ( \Gamma_\bt)_{n+1, n} + (\Gamma_{\bt})_{n,n+1}  \right] = 2\sum_{n\in\Z} h_n \left( \Gamma_\bt \right)_{n, n+1},
        \]
        the configuration $\bt$ is a critical point if 
        \begin{equation} \label{eq:def:critical}
            \forall n \in \Z, \quad  \boxed{ t_n = 1 - \frac{4}{\mu} \left( \Gamma_\bt \right)_{n, n+1} } \ .
        \end{equation}
    \end{definition}
    
    We implicitly used that $\Gamma$ is symmetric and real-valued. With this definition, the kink state constructed in~\cite{GarSer12} is a heteroclinic critical point (see Figure~\ref{fig:kink}). Now we can provide our main result, which states that all homoclinic or heteroclinic critical points of $\cF_\bt$ converge exponentially fast to $\bt^+$ at $+\infty$. 
    
    \begin{theorem}\label{th:main_result}
        Let $\bt$ be a homoclinic or heteroclinic critical point, and let $\bu$ be the sequence $u_n := t_n - t_n^+$. If $\bu$ is square integrable at $+\infty$ ({\it i.e.} $\sum_{n\geq 1} u_n^2<\infty$), then $\bu$ is exponentially localized at $+ \infty$: there are $C \ge 0$ and $\alpha > 0$ so that
        \[
        \left| u_n \right| \le C \re^{ - \alpha n}.
        \]
    \end{theorem}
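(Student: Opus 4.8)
The plan is to exploit the critical point equation $t_n = 1 - \frac{4}{\mu}(\Gamma_\bt)_{n,n+1}$ and to show that the off-diagonal matrix element $(\Gamma_\bt)_{n,n+1}$ of the spectral projector, together with the reference projector $\Gamma_{\bt^+}$, decay exponentially. Since $u_n = t_n - t^+_n = -\frac{4}{\mu}\big((\Gamma_\bt)_{n,n+1} - (\Gamma_{\bt^+})_{n,n+1}\big)$, everything reduces to comparing the two spectral projectors locally near $+\infty$. The natural tool is the Combes--Thomas estimate: because $0 \notin \sigma_{\mathrm{ess}}(T)$ (and in the heteroclinic case $0$ is an isolated eigenvalue, already handled by the contour $\sC$ of Lemma~\ref{lem:existence_contour}), the resolvent $(z-T)^{-1}$ has exponentially decaying matrix elements, uniformly for $z \in \sC$; the same holds for $T^+ := T(\{\bt^+\})$, whose spectrum has the explicit gap $[-2\delta,2\delta]$. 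Writing $\Gamma_\bt = \frac{1}{2\ri\pi}\oint_\sC (z-T)^{-1}\rd z$ and using the resolvent identity $(z-T)^{-1} - (z-T^+)^{-1} = (z-T)^{-1}(T-T^+)(z-T^+)^{-1}$, one gets
\[
    (\Gamma_\bt)_{n,n+1} - (\Gamma_{\bt^+})_{n,n+1}
    = \frac{1}{2\ri\pi}\oint_\sC \sum_{m}\big[(z-T)^{-1}\big]_{n,m}\,u'_m\,\big[(z-T^+)^{-1}\big]_{m,n+1}\,\rd z,
\]
where $u'_m = t_m - t^+_m$ only differs from $u_m$ near $m$ (it equals $u_m$ up to a shift of the alternating pattern — more precisely $T - T^+$ is the operator with entries $t_m - t^+_m = u_m$, so $u' = u$). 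Thus $|u_n| \le C\sum_m \re^{-\beta|n-m|}|u_m|$ with $\beta>0$ the Combes--Thomas rate.

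The key step is then to turn this convolution-type inequality into genuine exponential decay, given only that $\bu \in \ell^2(\Z^+)$. This is a standard bootstrap: from $\bu\in\ell^2$ and the kernel bound one first upgrades to $\bu\in\ell^1$ at $+\infty$ and $u_n\to 0$; then, fixing $\alpha < \beta$, one shows that $v_n := \re^{\alpha n}|u_n|$ satisfies a similar inequality $v_n \le C\sum_{m\le n} \re^{-(\beta-\alpha)(n-m)}v_m + C\re^{\alpha n}\sum_{m>n}\re^{-\beta(m-n)}|u_m|$; the second sum is controlled because $|u_m|$ is already small, and for the first one splits the sum at $m = n/2$ to absorb the large-$m$ part into the left-hand side via smallness of the tail of $\bu$, while the small-$m$ part contributes a bounded term. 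A cleaner packaging: let $S_N := \sup_{n\ge N}\re^{\alpha n}|u_n|$ (finite for each $N$ once we know $u_n = O(\re^{-\alpha' n})$ for some tiny $\alpha'$, which itself follows by a first pass of the argument), and show $S_N \le \theta S_{N/2} + C_N$ with $\theta<1$ and $C_N$ bounded, forcing $S_N$ bounded.

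I expect the main obstacle to be the heteroclinic case, where $0\in\sigma(T)$ is an eigenvalue sitting \emph{inside} the contour: the Combes--Thomas bound on $\oint_\sC(z-T)^{-1}\rd z$ must account for the pole, i.e. one must separately control the rank-one spectral projector $P_0$ onto $\Ker T$. But the zero-energy eigenfunction of a heteroclinic SSH operator is itself exponentially localized (it decays like the product of the two gap rates at $\pm\infty$) — this is the ``topologically protected midgap state'' and can be seen directly from the recursion $t_n \psi_{n+1} = -t_{n-1}\psi_{n-1}$ for the two sublattice components — so $P_0$ has exponentially decaying kernel as well, and the estimate goes through with $\beta = \min$ of the Combes--Thomas rate and the eigenfunction decay rate. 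A secondary technical point is uniformity of the resolvent bound near the two ``ends'' of the contour at $z = \Sigma \pm \ri$ (deep in the negative spectrum) versus near $z = -g/2 \pm \ri$; both are at positive distance from $\sigma(T)\cup\sigma(T^+)$ by construction of $\sC$, so the Combes--Thomas estimate applies uniformly and the $\oint_\sC$ of a bounded exponentially-decaying integrand is again exponentially decaying. Finally one should check that the asymptotic comparison is with $\bt^+$ and not $\bt^-$ even in the heteroclinic case at $+\infty$; this is exactly the hypothesis $\bu\in\ell^2(\Z^+)$ with $u_n = t_n - t^+_n$, so no issue arises.
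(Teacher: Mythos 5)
Your plan rests on the exact resolvent identity
\[
u_n \;=\; -\frac{4}{\mu}\left[\frac{1}{2\ri\pi}\oint_{\sC}\frac{1}{z-T}\,U\,\frac{1}{z-T^+}\,\rd z\right]_{n,n+1},
\]
followed by a Combes--Thomas kernel estimate, which after taking absolute values yields a \emph{linear} inequality of the form $|u_n|\le C\sum_m \re^{-\beta|n-m|}|u_m|$. This is where the argument breaks: the constant $C$ in such a bound is governed by the size of the resolvents and by $4/\mu$, and is \emph{not} small; in fact, since the underlying formula is an identity, you cannot expect $C\,\sum_k\re^{-\beta|k|}<1$ (that would force $\bu=0$). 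A non-contractive linear convolution inequality of this type is perfectly compatible with, say, $u_n=1/n$: one has $\sum_m\re^{-\beta|n-m|}/|m|\asymp 1/n$, so there is no contradiction, and indeed your first ``bootstrap pass'' (upgrading $\bu\in\ell^2(\Z^+)$ to $\bu\in\ell^1(\Z^+)$) already fails on this example. The later absorption step (``split at $m=n/2$ and use the smallness of the tail'') needs an \emph{extra} power of $u$ on the right-hand side to turn smallness of $u_m$ into smallness of a prefactor; but your right-hand side is homogeneous of degree one in $\bu$, so there is no spare factor to exploit.

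What is missing is precisely the step the paper takes: expand $(z-T)^{-1}$ one order further in $U$ around $(z-T^+)^{-1}$, so as to split the exact identity into a \emph{linear} operator $\sL$ (acting on $\bu$ through resolvents of $T^+$ only; this is the Hessian at the dimerized state, up to the factor $\mu$) and a \emph{quadratic} remainder $\bQ_U(\bu,\bu)$ carrying two factors of $U$. The linear operator $\sL$ is then inverted using the coercivity of the Hessian $H_{\bt^+}$ on $\ell^2(\Z)$ (Proposition~\ref{prop-coercive}); the quadratic right-hand side furnishes the $\varepsilon^2$ prefactor that makes the weighted bootstrap close (this is visible in the paper's estimate $\|\widetilde\bu_{\alpha,s}\|_{\ell^2}^2\le C\|\theta_{\alpha,s}\bu\|_{\ell^4}^4$, where the fourth power of $|u_m|$ gives the needed smallness for $m$ large). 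Without separating $\sL$ from the quadratic remainder, and without the invertibility of $\sL$, the decay cannot be extracted. Note that your instinct to use Combes--Thomas is not wasted: the paper's conjugation $\widetilde{T}^+_{\alpha,s}=\Theta_{\alpha,s}T^+\Theta_{\alpha,s}^{-1}$ is exactly the Combes--Thomas twist, but applied at the operator level so that the invertibility of $\sL$ is preserved under the weight, rather than merely as a pointwise kernel bound. Your treatment of the heteroclinic zero mode via exponential decay of the midgap eigenfunction is a valid alternative to the paper's observation that the zero mode lives on one sublattice and hence has $Z_{n,n+1}=0$.
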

    
    Of course, the same applies in the $-\infty$ direction, and we have exponential convergence to $\bt^+$ or $\bt^-$ at $-\infty$ depending whether the critical configuration is homoclinic or heteroclinic.
    \medskip

    \begin{figure}[!h]
        \includegraphics[scale = 0.7]{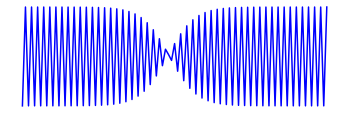}
        \caption{A localized kink appears in the chain with $L=101$ carbon atoms.}
        \label{fig:kink}
    \end{figure}

    Note that this exponential localization was already known for the exactly soluble continuum model of Takayama, Lin-Liu and Maki~\cite{TakLinMak80, Fei95, Fei04}.

    \subsection{Strategy of the proof}
    \label{ssec:strategy_proof}
    
    Let us briefly explain the strategy to prove Theorem~\ref{th:main_result}. We break the proof in several Lemmas, that we prove later in Section~\ref{sec:proofs}.
    
    \medskip

    Let $\bt$ be a homoclinic or heteroclinic critical point, and let $\{ \bu \}$ be the sequence $u_n := t_n - t_n^+$, so that $T = T^+ + U$. The configurations $\{\bt\}$ and $\{\bt^+\}$ are critical points, hence satisfy the Euler-Lagrange equations 
    \begin{equation*}
        t_n =  1 - \frac{4}{\mu}\G_{n,n+1}, \qquad 
        t_n^+ = 1 - \frac{4}{\mu} \G^+_{n,n+1},
    \end{equation*}
    with $\G:= \1(T^+ + U< 0)$, and $\G^+ := \1(T^+< 0)$. 
    According to Lemma~\ref{lem:existence_contour}, the expression of $\Gamma$ and $\Gamma^+$ can be written using Cauchy's residue formula with the {\em same} contour $\sC$, that is
    \[
    \G = \frac{1}{2\ri\pi}\oint_{\sC}\frac{\rd z}{ z - (T^+ + U) }, 
    \quad \text{and} \quad
    \G^+ = \frac{1}{2\ri\pi}\oint_{\sC}\frac{\rd z}{ z - T^+ },
    \]
    where the operators in the integrand are uniformly bounded in $z \in \sC$. Since $u_n = t_n - t_n^+$, we obtain (we use the resolvent formula in the last line)
    \begin{align*}\label{Euler_Equation_with_Cauchy}
        u_n & = \frac{4}{\mu} \left(\G^+ - \G\right)_{n,n+1}  = \frac{4}{\mu}  \left( \frac{1}{2\ri\pi}\oint_{\sC} \left[ \frac{1}{ z - T^+ } - \frac{1}{ z - (T^+ + U) } \right] \rd z \right)_{n, n+1} \\
        & = \frac{-4}{\mu}  \left( \frac{1}{2\ri\pi}\oint_{\sC} \left( \frac{1}{z-T^+}U\frac{1}{z - T^+} \right) \rd z \right)_{n, n+1} +  \frac{1}{\mu} (\bQ_{U}(\bu,\bu))_n,
    \end{align*}
    with the remainder term
    \[
    (\bQ_{U}(\bu_1, \bu_2))_n = - 4 \left(\frac{1}{2 \ri \pi}\oint_{\sC}\frac{1}{z-T^+}U_1\frac{1}{z - (T^++U)}U_2\frac{1}{z - T^+} \rd z\right)_{n,n+1}.
    \]
    Multiplying by $\mu$ and reordering the terms, this can be also written as 
    \begin{equation}\label{equation_L_Q}
        \forall n \in \Z, \quad (\sL\bu)_n = (\bQ_U(\bu,\bu))_n, 
    \end{equation}
    with the linear map
    \begin{equation} \label{eq:def:sL}
        (\sL\bu)_n = \mu u_n +  4 \left( \frac{1}{2\ri\pi}\oint_{\sC} \left( \frac{1}{z-T^+}U\frac{1}{z - T^+} \right) \rd z \right)_{n, n+1}.
    \end{equation}
    
    Formally, if $\bv$ is another real sequence, with corresponding operator $V$, we have
    \begin{equation} \label{eq:vLu}
        \langle \bv, (\sL \bu ) \rangle = \sum_{n \in \Z} v_n (\sL \bu )_n = \mu \sum_{n \in \Z} v_n u_n + 2 \Tr \left( \frac{1}{2\ri\pi}\oint_{\sC}  \frac{1}{z-T^+}U\frac{1}{z - T^+} V \right) \rd z
    \end{equation}
    and we recognize the expression of the Hessian $H_{\bt^+}(\bv, \bu)$ in~\eqref{eq:def:H}. Unfortunately, the previous computations is formal, since $\bu$ is not necessary in $\ell^2(\Z)$. We only know that it is square integrable at $+\infty$. Actually, for a heteroclinic configuration, we have $\bu \notin \ell^2(\Z)$, since $\bu$ does not decay to $0$ at $-\infty$. In order to bypass this difficulty, we regularize $\bu$ using appropriate cut-off functions.

    \medskip

    For $\alpha > 0$ and $s \in \Z$ that we choose later ($\alpha$ will be small, and $s$ will be large), we introduce the function $\theta_{\alpha, s} : \Z \to \R^+$ defined by 
    \begin{equation}\label{theta_n}
        \theta_{\alpha, s}(n) = \min \{ \re^{\alpha n}, \re^{ \alpha s} \} = 
        \displaystyle
        \begin{cases}
            \re^{\alpha n}, \; \text{ if } n<s \cr
            \re^{\alpha s}, \; \text{ if } n\ge s
        \end{cases},
    \end{equation}
    and denote by $\Theta_{\alpha, s}$ the multiplication operator by $\theta_{\alpha, s}$, defined by $(\Theta_{\alpha, s})_{n,m} = \theta_{\alpha, s}(n)\delta_{n,m}$.
    
    \medskip
    
    In what follows, we will consider the sequence $\widetilde{\bu}_{\alpha, s}$, defined by
    \[
    (\widetilde{u}_{\alpha, s})_n := \theta_{\alpha, s}(n) \theta_{\alpha, s}(n+1)  u_n,
    \quad \text{with corresponding operator}  \quad
    \widetilde{U}_{\alpha, s} = \Theta_{\alpha, s} U \Theta_{\alpha, s}.
    \]
    Since $\bu$ is bounded and square integrable at $+\infty$ the vector $\widetilde{\bu}_{\alpha, s}$ is in $\ell^2(\Z)$ for all $\alpha > 0$ and all $s \in \Z$. We also introduce the operator $\widetilde{T}_{\alpha, s}^+$ acting on $\ell^2(\Z)$, and defined in the canonical basis by
    \[
    \forall n \in \Z, \qquad 
    \left( \widetilde{T}_{\alpha, s}^+ \right)_{n, n+1} := \dfrac{\theta_{\alpha, s}(n)}{\theta_{\alpha, s}(n+1)} t_n^+ , \quad 
    \left( \widetilde{T}_{\alpha, s}^+ \right)_{n+1, n} := \dfrac{\theta_{\alpha, s}(n+1)}{\theta_{\alpha, s}(n)} t_n^+ ,
    \]
    and $\left( \widetilde{T}_{\alpha, s}^+ \right)_{i,j} = 0$ if $| i -j | \neq 1$. 
    Note that $\widetilde{T}_{\alpha, s}^+$ is not symmetric. Using that
    \[
    \dfrac{\theta_{\alpha, s}(n)}{\theta_{\alpha, s}(n+1)} = 
    \begin{cases}
        \re^{- \alpha} \quad \text{if} \quad n < s \\
        1 \quad \text{if} \quad n \ge s,
    \end{cases}
    \]
    we see that $\widetilde{T}_{\alpha, s}^+$ has the matrix form
    \begin{equation}
        \widetilde{T}_{\alpha,s}^+ = 
        \begin{pmatrix}
            \ddots & \ddots    & \ddots        & \ddots & \ddots &  \ddots& \ddots\\
            \ddots & 0    & t_{s-2}^+ \re^{-\alpha}     & 0       & 0     & 0        & \ddots\\
            \ddots & t_{s-2}^+ \re^{\alpha}      & 0    & t_{s-1}^+ \re^{-\alpha}        & 0         & 0        & \dots \\
            \ddots    & 0       & t_{s-1}^+ \re^{\alpha}       & 0    & t_s^+     & 0    & \ddots\\
            \ddots  & 0   & 0    & t_s^+   & 0         & t_{s+1}^+        & \ddots\\
            \ddots & 0 & 0 & 0 & t_{s+1}^+     & 0        & \ddots   \\
            \ddots & \ddots & \ddots & \ddots & \ddots     & \ddots   & \ddots
        \end{pmatrix}.
    \end{equation}
    
    This operator is constructed to satisfy the following commutation relations (see Section~\ref{ssec:proof:Talphas} for the proof).
    \begin{lemma} \label{lem:Talphas}
        The operator $\widetilde{T}_{\alpha,s}^+$ satisfies
        \begin{equation} \label{eq:key_equality_T_Theta}
            \Theta_{\alpha, s} T^+ = \widetilde{T}_{\alpha,s}^+ \Theta_{\alpha, s} \quad \text{and} \quad
            T^+ \Theta_{\alpha, s}  = \Theta_{\alpha, s} \left( \widetilde{T}_{\alpha,s}^+  \right)^*.
        \end{equation}
        There is $\alpha_* > 0$ and $C \ge 0$ so that, for all $0 \le \alpha < \alpha_*$, all $s \in \Z$, and all $z \in \sC$, the operators $z - \widetilde{T}_{\alpha,s}^+$ and $z - (\widetilde{T}_{\alpha,s}^+)^*$ are invertible, with $\|(z - \widetilde{T}_{\alpha,s}^+)^{-1} \|_{\rm op} \le C$ and $\|(z - (\widetilde{T}_{\alpha,s}^+)^*)^{-1} \|_{\rm op} \le C$. In addition, we have
        \[
        \Theta_{\alpha, s}\dfrac{1}{z - T^+} = \dfrac{1}{z - \widetilde{T}_{\alpha,s}^+}\Theta_{\alpha, s}  , 
        \quad \text{and} \quad
        \dfrac{1}{z - T^+} \Theta_{\alpha, s} = \Theta_{\alpha, s} \dfrac{1}{z - (\widetilde{T}_{\alpha,s}^+)^*} .
        \]
    \end{lemma}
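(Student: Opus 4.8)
The plan is to verify the two algebraic identities \eqref{eq:key_equality_T_Theta} by a direct matrix-entry computation, then deduce the resolvent statements from the first one plus a uniform invertibility bound that we obtain by a perturbation argument off the contour $\sC$. First I would check the commutation relations: since $\Theta_{\alpha,s}$ is diagonal, $(\Theta_{\alpha,s} T^+)_{n,m} = \theta_{\alpha,s}(n) (T^+)_{n,m}$ and $(\widetilde T_{\alpha,s}^+ \Theta_{\alpha,s})_{n,m} = (\widetilde T_{\alpha,s}^+)_{n,m}\, \theta_{\alpha,s}(m)$, so it suffices to compare the two on the only nonzero entries $m = n\pm 1$. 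For $m=n+1$ the left side is $\theta_{\alpha,s}(n) t_n^+$ and the right side is $\frac{\theta_{\alpha,s}(n)}{\theta_{\alpha,s}(n+1)} t_n^+ \cdot \theta_{\alpha,s}(n+1)$, which agree; the case $m=n-1$ is identical. Transposing the first identity (all objects are real) and using $(\Theta_{\alpha,s})^* = \Theta_{\alpha,s}$, $(T^+)^* = T^+$ gives the second identity. Rewriting the first identity as $\Theta_{\alpha,s} T^+ \Theta_{\alpha,s}^{-1} = \widetilde T_{\alpha,s}^+$ (note $\Theta_{\alpha,s}$ is invertible, though not boundedly so uniformly in $s$ — but this conjugation is purely formal at this point, or can be done entrywise) makes the structure transparent: $\widetilde T_{\alpha,s}^+$ is a similarity transform of $T^+$.

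Next I would establish the uniform resolvent bound. The key point is that $\widetilde T_{\alpha,s}^+ - T^+$ has entries $t_n^+(\re^{\mp\alpha}-1)$ for $n<s$ and $0$ for $n\ge s$, hence $\|\widetilde T_{\alpha,s}^+ - T^+\|_{\rm op} \le C_0 (1-\re^{-\alpha}) \le C_0 \alpha$ uniformly in $s$, by a Schur test / bounded-bandwidth estimate, since the $t_n^+$ are bounded by $W+\delta$. From Lemma~\ref{lem:existence_contour} applied with $\bh = \bnull$ (or directly, since $T^+$ is the operator of the $2$-periodic configuration $\bt^+$, whose spectrum is $[-2W,-2\delta]\cup[2\delta,2W]$), we know $\sup_{z\in\sC}\|(z-T^+)^{-1}\|_{\rm op} =: M < \infty$. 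Then for $\alpha$ small enough that $C_0\alpha M < 1/2$, a Neumann series gives that $z - \widetilde T_{\alpha,s}^+ = (z - T^+)\bigl(1 - (z-T^+)^{-1}(\widetilde T_{\alpha,s}^+ - T^+)\bigr)$ is invertible with $\|(z-\widetilde T_{\alpha,s}^+)^{-1}\|_{\rm op} \le 2M =: C$, uniformly in $z\in\sC$ and $s\in\Z$; the same applies to $(\widetilde T_{\alpha,s}^+)^*$ since $\|(\widetilde T_{\alpha,s}^+)^* - T^+\|_{\rm op} = \|\widetilde T_{\alpha,s}^+ - T^+\|_{\rm op}$.

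Finally, the intertwining of the resolvents: starting from $\Theta_{\alpha,s} T^+ = \widetilde T_{\alpha,s}^+ \Theta_{\alpha,s}$, add $z\Theta_{\alpha,s}$ to get $\Theta_{\alpha,s}(z - T^+) = (z - \widetilde T_{\alpha,s}^+)\Theta_{\alpha,s}$; multiply on the left by $(z-\widetilde T_{\alpha,s}^+)^{-1}$ and on the right by $(z-T^+)^{-1}$ — both now known to be bounded operators for $z\in\sC$ — to obtain $(z-\widetilde T_{\alpha,s}^+)^{-1}\Theta_{\alpha,s} = \Theta_{\alpha,s}(z-T^+)^{-1}$. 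One subtlety to handle carefully: $\Theta_{\alpha,s}$ is a bounded operator (with norm $\re^{\alpha s}$) but not boundedly invertible, so the manipulation $\Theta_{\alpha,s} A = B \Theta_{\alpha,s} \Rightarrow A\Theta_{\alpha,s}^{-1} = \Theta_{\alpha,s}^{-1}B$ is not directly legitimate; however, the identities we actually need only involve multiplying by $\Theta_{\alpha,s}$ (never its inverse) and by the bounded resolvents, so every step stays within bounded operators. The second resolvent identity follows the same way from the second relation in \eqref{eq:key_equality_T_Theta}. The only mild obstacle is bookkeeping the one-sided unboundedness of $\Theta_{\alpha,s}$ and making sure each intertwining is written as an identity between genuinely bounded operators; everything else is routine.
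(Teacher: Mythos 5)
Your proof is correct and follows essentially the same route as the paper: verify the intertwining relations entrywise, adjoint to get the second, bound $\|\widetilde T_{\alpha,s}^+ - T^+\|_{\rm op}$ uniformly in $s$ by an $O(\alpha)$ quantity, invoke a Neumann series off the resolvent of $T^+$ on the contour, and then transfer the identity $\Theta_{\alpha,s}(z-T^+) = (z-\widetilde T_{\alpha,s}^+)\Theta_{\alpha,s}$ to the resolvents by left/right multiplication by the now-bounded inverses. Your explicit remark that $\Theta_{\alpha,s}^{-1}$ is never used (only multiplication by the bounded $\Theta_{\alpha,s}$) is a useful clarification the paper leaves implicit. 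One tiny slip: the nonzero entries of $\widetilde T_{\alpha,s}^+ - T^+$ carry factors $\re^{-\alpha}-1$ and $\re^{\alpha}-1$, and since $|\re^{\alpha}-1| > |1-\re^{-\alpha}|$ for $\alpha>0$, the sharp operator-norm bound is $\lesssim(\re^{\alpha}-1)$ rather than $\lesssim(1-\re^{-\alpha})$; your final conclusion $\le C_0\alpha$ is unaffected.
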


    We multiply~\eqref{equation_L_Q} on the left by $\theta_s(n)$, and on the right by $\theta_s(n+1)$. Using that, for any operator $A$ on $\ell^2(\Z)$, we have $\theta_{\alpha, s}(n) A_{n,n+1} \theta_{\alpha, s}(n+1) = (\Theta_{\alpha, s} A \Theta_{\alpha, s})_{n,n+1}$, and the fact that
    \begin{align*}
        \Theta_{\alpha, s}  \frac{1}{z-T^+}U\frac{1}{z - T^+} \Theta_{\alpha, s} 
        = \frac{1}{z-\widetilde{T}^+_{\alpha, s}} \underbrace{\Theta_{\alpha, s} U \Theta_{\alpha, s}}_{ = \widetilde{U}_{\alpha, s}} \frac{1}{z - (\widetilde{T}^+_{\alpha, s})^*},
    \end{align*}
    we obtain an equation of the form
    \begin{equation} \label{eq:widetilde_LU=Q}
        \left( \widetilde{\sL}_{\alpha, s} \widetilde{\bu}_{\alpha, s} \right)_n = \left( \widetilde{\bQ}_{\alpha, s, U}(\bu,\bu) \right)_n,
    \end{equation}
    where $\widetilde{\sL}_{\alpha, s}$ is the operator defined on $\ell^2(\Z)$ by
    \begin{equation} \label{eq:def:L_alpha_s}
        \forall \widetilde{\bv} \in \ell^2(\Z), \quad  \left( \widetilde{\sL}_{\alpha, s} \widetilde{\bv} \right)_n := \mu  (\widetilde{v}_{\alpha, s})_n + 4 \left( \frac{1}{2\ri\pi}\oint_{\sC} \left( \frac{1}{z-\widetilde{T}^+_{\alpha, s}} \widetilde{V} \frac{1}{z - (\widetilde{T}^+_{\alpha, s})^*} \right) \rd z \right)_{n, n+1},
    \end{equation}
    with the right-hand side given by
    \begin{equation} \label{eq:def:Q}
        (\widetilde{\bQ}_{\alpha, s, U}(\bu_1,\bu_2))_n = - 4 \left(\frac{1}{2 \ri \pi}\oint_{\sC}\frac{1}{z-\widetilde{T}_{\alpha, s}^+} (\Theta_{\alpha, s} U_1) \frac{1}{z - (T^+ +U)} (U_2 \Theta_{\alpha, s}) \frac{1}{z - (\widetilde{T}_{\alpha, s}^+)^*} \rd z\right)_{n,n+1}.
    \end{equation}
    
    The exponential decay is a consequence of the following Lemmas.
    
    \begin{lemma} \label{lem:sL_invertible}
        The operator $\sL$ defined in~\eqref{eq:def:sL}, seen as an operator from $\ell^2(\Z)$ to itself, is bounded symmetric with bounded inverse. \\
        There is $\alpha_* > 0$ and $C \ge 0$ so that, for all $0 \le \alpha < \alpha_*$ and all $s \in \Z$, the operator $\widetilde{\sL}_{\alpha, s}$ defined in~\eqref{eq:def:L_alpha_s}, seen as an operator from $\ell^2(\Z)$ to itself, is bounded with bounded inverse.
    \end{lemma}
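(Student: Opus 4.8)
The plan is to study $\sL$ and $\widetilde\sL_{\alpha,s}$ as perturbations of the Hessian $H_{\bt^+}$ at the dimerized state, and to exploit the fact that $\bt^+$ is a (strict) local minimizer of the SSH energy among periodic-type configurations, which should force the Hessian to be coercive. First I would identify $\sL$ with the Hessian operator: by the formal computation in~\eqref{eq:vLu}, for $\bu,\bv\in\ell^2(\Z)$ one has $\langle\bv,\sL\bu\rangle = H_{\bt^+}(\bv,\bu)$, and Lemma~\ref{lem:F_is_smooth} already guarantees that $H_{\bt^+}$ extends to a bounded symmetric bilinear form on $\ell^2(\Z)$; hence $\sL$ is bounded and symmetric on $\ell^2(\Z)$. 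The boundedness of the contour integral term is immediate from Lemma~\ref{lem:existence_contour} (uniform resolvent bound on $\sC$, finite contour length, and $\|U\|_{\rm op}\le 2\|\bu\|_{\ell^\infty}<\infty$). So the real content is the \emph{lower bound} $\sL\ge c>0$.

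For the coercivity of $\sL$ I would argue by translation invariance and Bloch–Floquet analysis. Since $\bt^+$ is $2$-periodic, $T^+$ commutes with the shift by $2$, so $\sL$ is a periodic (in fact, by the $(-1)^n$ structure, $2$-periodic) convolution-type operator on $\ell^2(\Z)$, and can be block-diagonalized via the Fourier transform into a family of finite matrices $\widehat{\sL}(k)$, $k\in[0,2\pi)$. It then suffices to check $\widehat{\sL}(k)\ge c>0$ for all $k$. The spectral gap of $T^+$ (its spectrum is $[-2W,-2\delta]\cup[2\delta,2W]$, bounded away from $0$) makes the resolvent $\frac{1}{z-T^+}$ on $\sC$ analytic and bounded, and the contour integral reproduces exactly the second derivative of the electronic energy; combined with the elastic term $\mu\sum u_n^2$, the resulting quadratic form is the second variation of $\cE$ at the minimizer $\bt^+$. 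Strict positivity should come from the fact, established in the Kennedy–Lieb / Lieb–Nachtergaele analysis and in~\cite{GonKouSer23}, that the dimerized state is a nondegenerate minimizer (the Peierls instability is "rigid"): no zero mode survives because the only would-be zero modes are the ones ruled out by $\delta>0$ and $\mu>0$. Concretely I expect to show $\widehat{\sL}(k) = \mu\, \mathrm{Id} + (\text{positive semidefinite})$ with the first term already giving $\sL\ge \mu' >0$ after controlling the sign of the electronic block — or, if the electronic contribution is not sign-definite, to prove the min over $k$ of the smallest eigenvalue of $\widehat{\sL}(k)$ is strictly positive by an explicit $2\times2$ computation using the known band structure of the SSH Hamiltonian.

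For the perturbed operator $\widetilde\sL_{\alpha,s}$, the strategy is a norm-resolvent perturbation argument uniform in $s$. By Lemma~\ref{lem:Talphas} the operators $z-\widetilde T^+_{\alpha,s}$ and $z-(\widetilde T^+_{\alpha,s})^*$ are invertible with resolvent norms bounded by a constant $C$ uniform in $0\le\alpha<\alpha_*$, $s\in\Z$, $z\in\sC$. I would show that $\widetilde T^+_{\alpha,s}\to T^+$ in operator norm as $\alpha\to 0$, \emph{uniformly in $s$}: indeed $\widetilde T^+_{\alpha,s}-T^+$ has entries $t^+_n(\re^{\mp\alpha}-1)$ only, which is $O(\alpha)$ in $\ell^\infty$-of-entries hence in operator norm for a tridiagonal operator, with constant independent of $s$. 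Writing the difference $\widetilde\sL_{\alpha,s}-\sL$ via two resolvent identities, each factor $\frac{1}{z-\widetilde T^+_{\alpha,s}}-\frac{1}{z-T^+}$ is $O(\alpha)$ in operator norm uniformly, so $\|\widetilde\sL_{\alpha,s}-\sL\|_{\rm op}\le C\alpha$. Since $\sL$ has bounded inverse, a Neumann-series argument gives, for $\alpha$ small enough (shrinking $\alpha_*$), that $\widetilde\sL_{\alpha,s}$ is invertible with $\|\widetilde\sL_{\alpha,s}^{-1}\|_{\rm op}\le 2\|\sL^{-1}\|_{\rm op}$, uniformly in $s$.

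The main obstacle is the strict positivity (invertibility) of $\sL=H_{\bt^+}$ itself: everything else is soft perturbation theory, but showing the Hessian at the dimerized ground state has no zero mode requires genuinely using that $\bt^+$ is a \emph{nondegenerate} critical point of the SSH functional. If a clean reference for nondegeneracy is not available, I would carry out the Bloch-reduction $2\times2$ computation in full, using that the lower SSH band is $-\sqrt{4\delta^2\sin^2(k/2)+4W^2\cos^2(k/2)}$ (or the analogous dispersion), differentiating twice in the gap parameter, and verifying the resulting $2\times2$ matrix $\widehat\sL(k)$ is positive definite for every $k$, with a uniform lower bound obtained by compactness of $k\in[0,2\pi)$ and continuity.
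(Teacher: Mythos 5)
Your treatment of the perturbed operator $\widetilde{\sL}_{\alpha,s}$ matches the paper's: the tridiagonal estimate $\|\widetilde{T}^+_{\alpha,s}-T^+\|_{\rm op}=O(\alpha)$ uniformly in $s$ (cf.~\eqref{eq:norm_T-Talphas}), two resolvent identities converting this into $\|\widetilde{\sL}_{\alpha,s}-\sL\|_{\rm op}=O(\alpha)$, and a Neumann series to conclude. The reduction of the invertibility of $\sL$ to the coercivity of $H_{\bt^+}$ via Lax--Milgram is also exactly the paper's route. The boundedness and symmetry of $\sL$ from Lemma~\ref{lem:F_is_smooth} is likewise fine.

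The genuine gap is in the coercivity of $H_{\bt^+}$, which the paper isolates as Proposition~\ref{prop-coercive} and proves by a Kennedy--Lieb style convexity argument on the finite closed $L$-chain: averaging $\Tr f(T^2/\|T\|_{\rm op})$ with $f(x)=-\sqrt{x}-\tfrac18 x^2$ over cyclic shifts yields an energy surplus proportional to $\Tr[T^4-\langle T^2\rangle^2]$, whose second variation at the dimerized minimizer reduces to $\|T_L^+H+HT_L^+\|_{\fS_2}^2$ bounded below through explicit positive-definite $2\times 2$ blocks $Q_n$ with $\det Q_n=32\,W_L^2\delta_L^2$; one then passes to $L\to\infty$ for compactly supported $\bh$. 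Your Bloch--Floquet plan is a legitimate alternative in spirit --- $\sL$ does commute with the shift by $2$, so it block-diagonalizes into $2\times 2$ fibers $\widehat{\sL}(k)$, and it would suffice to check $\min_k\lambda_{\min}(\widehat{\sL}(k))>0$ --- but two things are off. First, your primary expectation $\widehat{\sL}(k)=\mu\,\mathrm{Id}+(\text{positive semidefinite})$ is backwards: since $H\mapsto -2\Tr((T+H)_-)$ is concave, the electronic contribution to $\sL$ is \emph{negative} semidefinite, so coercivity is a genuine competition between the elastic term and the electronic term, not a sum of positives. Second, the explicit fiber computation that you defer is exactly where the work lies, and it must use that $(W,\delta)$ solve the gap equation (equivalently that $\bt^+$ is actually a critical point, and indeed the minimizer) for the given $\mu$ --- without that relation the lower bound can fail (e.g.\ for $\mu$ too small the dimerized state need not be a nondegenerate minimizer). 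Nothing in your sketch invokes the criticality of $\bt^+$, so the key estimate $\sL\ge c>0$ remains asserted rather than proved.
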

    
    Note that the operator $\widetilde{\sL}_{\alpha, s}$ is not symmetric for $\alpha > 0$. We refer to Section~\ref{ssec:proof:sL_invertible} for the proof. A key property that we use in the proof is the fact that the Hessian $H_{\bt^+}$ is coercive (see Proposition~\ref{prop-coercive} below). Due to the equality $\bra \bv, \sL \bv \ket_{\ell^2} = H_{\bt^+}(\bv, \bv)$ for $\bv \in \ell^2(\Z)$ (see~\eqref{eq:vLu}), this implies that $\sL$ is invertible.
    
    In order to control the right-hand side of~\eqref{eq:widetilde_LU=Q}, we use the following result (see Section~\ref{ssec:proof:Q} for the proof).
    
    \begin{lemma} \label{lem:Q}
        There is $\alpha_* > 0$ and $C \ge 0$ so that, for all $0 \le \alpha < \alpha_*$ and all $s \in \Z$, we have 
        \[
        \forall \bu_1, \bu_2 \in \ell^\infty(\Z) \cap \ell^2(\Z^+), \qquad
        \left\| \widetilde{Q}_{\alpha, s, U}(\bu_1, \bu_2) \right\|_{\ell^2(\Z)} \le C \| \theta_{\alpha, s} \bu_1 \|_{\ell^4} \| \theta_{\alpha, s} \bu_2 \|_{\ell^4}.
        \]
    \end{lemma}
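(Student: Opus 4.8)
The plan is to estimate the operator norm of the integrand in~\eqref{eq:def:Q} along the contour $\sC$ by peeling off the five factors one at a time, exploiting that $\|(z-\widetilde T_{\alpha,s}^+)^{-1}\|_{\rm op}$, $\|(z-(\widetilde T_{\alpha,s}^+)^*)^{-1}\|_{\rm op}$ and $\|(z-(T^++U))^{-1}\|_{\rm op}$ are all uniformly bounded on $\sC$ (the first two by Lemma~\ref{lem:Talphas}, the third by Lemma~\ref{lem:existence_contour} applied to $\bt^+$ with $\bh=\bu$, noting $\|\bu\|_{\ell^\infty}$ may need to be handled by absorbing part of $U$ into $T^+$, but this is fine since $\bu\to 0$ at $+\infty$). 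The subtlety is that we want the $(n,n+1)$ matrix element, not the operator norm of the whole product, so I will follow the same device already used in the text: write $(\widetilde Q_{\alpha,s,U}(\bu_1,\bu_2))_n$ as $\langle \delta_n, (\cdots)\delta_{n+1}\rangle$ and estimate the $\ell^2$ norm of the sequence $n\mapsto (\cdots)_{n,n+1}$ by the Hilbert–Schmidt norm of a suitable sandwiched operator.

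Concretely, the two ``outer'' resolvents $\frac{1}{z-\widetilde T_{\alpha,s}^+}$ and $\frac{1}{z-(\widetilde T_{\alpha,s}^+)^*}$ are bounded operators, the ``middle'' resolvent $\frac{1}{z-(T^++U)}$ is bounded, so the entire Hilbert–Schmidt content of the product lives in the two factors $\Theta_{\alpha,s}U_1$ and $U_2\Theta_{\alpha,s}$. Since $U_i$ is a tridiagonal (nearest-neighbour) operator, $\Theta_{\alpha,s}U_1 = \Theta_{\alpha,s}U_1$ has Hilbert–Schmidt norm controlled by $\big(\sum_n \theta_{\alpha,s}(n)^2(|(u_1)_n|^2+|(u_1)_{n-1}|^2)\big)^{1/2}$, which is comparable to $\|\theta_{\alpha,s}\bu_1\|_{\ell^4}$ once we pair it with the second Hilbert–Schmidt factor $U_2\Theta_{\alpha,s}$ via Hölder: a product of two Hilbert–Schmidt operators has trace-norm at most the product of the two Hilbert–Schmidt norms, and the $\ell^4$ appears because $\|\Theta_{\alpha,s}U_i\|_{\rm HS}$ is an $\ell^2$-type quantity of $\theta_{\alpha,s}\bu_i$ but we split the weight $\theta_{\alpha,s}(n)\theta_{\alpha,s}(n+1)$ symmetrically and use that $\theta_{\alpha,s}$ is slowly varying ($\theta_{\alpha,s}(n+1)\le \re^{\alpha}\theta_{\alpha,s}(n)$, bounded for $\alpha<\alpha_*$). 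This is exactly where $\ell^4$ rather than $\ell^2$ enters: each of $\bu_1,\bu_2$ contributes one Hilbert–Schmidt factor, and balancing the two weighted $\ell^2$ norms against each other forces the fourth-power exponent. I would carry this out by inserting the identity $\Theta_{\alpha,s}U_i = (\Theta_{\alpha,s}U_i\Theta_{\alpha,s}^{1/2})\Theta_{\alpha,s}^{-1/2}$ only if needed; more simply, bound the $(n,n+1)$-element of the full product by $\sum$ over intermediate indices and Cauchy–Schwarz.

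After obtaining, for each fixed $z\in\sC$, the bound
\[
    \Big\| n\mapsto \Big(\tfrac{1}{z-\widetilde T_{\alpha,s}^+}(\Theta_{\alpha,s}U_1)\tfrac{1}{z-(T^++U)}(U_2\Theta_{\alpha,s})\tfrac{1}{z-(\widetilde T_{\alpha,s}^+)^*}\Big)_{n,n+1}\Big\|_{\ell^2(\Z)} \le C(z)\,\|\theta_{\alpha,s}\bu_1\|_{\ell^4}\|\theta_{\alpha,s}\bu_2\|_{\ell^4},
\]
with $C(z)$ uniformly bounded on $\sC$, I integrate over the contour (which has finite length) and pull the $\ell^2$-norm inside via Minkowski's integral inequality, picking up the factor $4\cdot\frac{1}{2\pi}\,|\sC|$ absorbed into $C$. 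The uniformity in $\alpha\in[0,\alpha_*)$ and $s\in\Z$ comes entirely from the uniform resolvent bounds of Lemma~\ref{lem:Talphas} and the slow-variation of $\theta_{\alpha,s}$, neither of which degenerates. The main obstacle, I expect, is the middle resolvent $\frac{1}{z-(T^++U)}$: it is sandwiched between $\Theta_{\alpha,s}U_1$ on the left and $U_2\Theta_{\alpha,s}$ on the right but is itself \emph{not} conjugated by $\Theta_{\alpha,s}$, so one cannot turn it into a weighted resolvent; one must simply use that it is bounded on $\ell^2$ uniformly in $z\in\sC$ (Lemma~\ref{lem:existence_contour} with the splitting $T=T^++U$, where the contour is chosen for $\bt$ itself and $\bh=\bnull$, or for $\bt^+$ and $\bh=\bu$ after truncating $\bu$ far from $+\infty$ so that $\|\bh\|_{\ell^\infty}\le\eta$) — verifying that a single common contour works for all three operators simultaneously is the one point requiring care, but it has essentially been arranged already in the construction of $\sC$.
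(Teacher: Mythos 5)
Your overall strategy --- push the $(n,n+1)$ matrix elements into a Hilbert--Schmidt norm via $\sum_n |A_{n,n+1}|^2 \le \|A\|_{\fS_2}^2$, strip off the three resolvents in operator norm using Lemmas~\ref{lem:Talphas} and~\ref{lem:existence_contour}, and leave the Schatten content in the two factors $\Theta_{\alpha,s}U_1$ and $U_2\Theta_{\alpha,s}$ --- is exactly the paper's. But the mechanism you give for how the $\ell^4$ norm arises is not correct, and that is the crux of the lemma. You assert that $\|\Theta_{\alpha,s}U_i\|_{\fS_2}$ is ``an $\ell^2$-type quantity of $\theta_{\alpha,s}\bu_i$'' and that pairing two such Hilbert--Schmidt factors and ``splitting the weight symmetrically'' produces the $\ell^4$. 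It does not: $\|A\|_{\fS_2}\|B\|_{\fS_2}$ controls $\|AB\|_{\fS_1}$, not $\|AB\|_{\fS_2}$, and two weighted $\ell^2$ norms cannot recombine into $\ell^4$ norms by any such bookkeeping. What you actually need is the Schatten--H\"older inequality with the exponent choice $\tfrac14+\tfrac14=\tfrac12$, i.e.\ $\|R_1(\Theta U_1)R_2(U_2\Theta)R_3\|_{\fS_2}\le \|R_1\|_{\rm op}\|R_2\|_{\rm op}\|R_3\|_{\rm op}\|\Theta U_1\|_{\fS_4}\|U_2\Theta\|_{\fS_4}$, so the two factors must be measured in $\fS_4$, not $\fS_2$. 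And the resulting bound $\|\Theta_{\alpha,s}U\|_{\fS_4}\le C\|\theta_{\alpha,s}\bu\|_{\ell^4}$ is not a direct Cauchy--Schwarz computation: as in Lemma~\ref{lem:a_and_A}, it requires proving the endpoint cases $p=1$ (trace-class by duality against bounded $K$, using $|K_{n,n\pm1}|\le \|K\|_{\rm op}$) and $p=\infty$ (operator norm), each exploiting the slow variation $\theta_{\alpha,s}(n)\le\theta_{\alpha,s}(n+1)\le \re^\alpha\theta_{\alpha,s}(n)$, and then interpolating by Riesz--Thorin for Schatten classes. This interpolation step is missing from your proposal, and without it the $\ell^4$ norm in the statement is unexplained.

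One smaller point: your worry about the middle resolvent $\frac{1}{z-(T^++U)}$ and the need to ``truncate $\bu$ far from $+\infty$'' is unnecessary. Since $T^++U=T$ exactly, Lemma~\ref{lem:existence_contour} applied to the critical configuration $\bt$ itself with $\bh=\bnull$ already gives the uniform bound $\|(z-T)^{-1}\|_{\rm op}\le C$ for all $z\in\sC$; and the contour $\sC$, built from $g={\rm dist}(0,\sigma(T)\setminus\{0\})\le 2\delta$, automatically avoids $\sigma(T^+)=[-2W,-2\delta]\cup[2\delta,2W]$, so the tilded resolvent bounds of Lemma~\ref{lem:Talphas} hold on the same contour for $\alpha<\alpha_*$. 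No separate argument is needed there.
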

    
    We can now prove the exponential decay of $\bu$. From~\eqref{eq:widetilde_LU=Q} and the two Lemmas, we get that there is $C \ge 0$ and $\alpha_* > 0$ so that, for all $0 < \alpha \le \alpha_*$ and all $s \in \Z$, we have
    \begin{equation} \label{eq:l2_le_l4}
        \| \widetilde{\bu}_{\alpha, s} \|_{\ell^2}^2 \le C \| \theta_{\alpha, s} \bu \|_{\ell^4}^4.
    \end{equation}
    Concerning the left-hand side, we note that $\theta_{\alpha, s}$ is increasing so that $\theta_{\alpha, s}(n) \theta_{\alpha, s}(n+1) \ge \theta_{\alpha, s}^2(n)$. Hence
    \[
    \| \theta_{\alpha, s}^2 \bu \|_{\ell^2}^2 \le \| \widetilde{\bu}_{\alpha, s} \|_{\ell^2}^2.
    \]
    Let us now bound the right-hand side. We fix $\varepsilon  := \frac{1}{\sqrt{2C}}$, where $C$ is the constant appearing in~\eqref{eq:l2_le_l4}. Since $\bu$ goes to $0$ at $+ \infty$, there is $M$ large enough so that $ | u_n |  < \varepsilon$ for all $n \ge M$. This gives
    \begin{align*}
        \| \theta_{\alpha, s} \bu \|_{\ell^4}^4 
        & =  \sum_{n \le M} \theta_{\alpha, s}^4(n) | u_n |^4 + \sum_{n > M }  \theta_{\alpha, s}^4(n) | u_n |^4 
        \le \| \bu \|_{\ell^\infty}^4 \sum_{n \le M} \theta_{\alpha, s}^4(n)
        + \varepsilon^2 \sum_{n > M} \theta_{\alpha, s}^4(n) | u_n |^2 \\
        & = \| \bu \|_{\ell^\infty}^4 \sum_{n \le M} \re^{4\alpha n} + \varepsilon^2 \| \theta_{\alpha, s}^2 \bu \|_{\ell^2}^2
        = \| \bu \|_{\ell^\infty}^4 \dfrac{\re^{4\alpha M}}{1 - \re^{-4\alpha}} + \varepsilon^2 \| \theta_{\alpha, s}^2 \bu \|_{\ell^2}^2.
    \end{align*}
    Plugging these inequalities in~\eqref{eq:l2_le_l4} gives
    \[
    (1 - C \varepsilon^2) \| \theta_{\alpha, s}^2 \bu \|_{\ell^2}^2 \le \| \bu \|_{\ell^\infty}^4 \dfrac{\re^{4\alpha M}}{1 - \re^{-4\alpha}}.
    \]
    With our choice of $\varepsilon$, the quantity $1 - C \varepsilon^2 = \frac12$ is positive. The right-hand side is a bound independent of $s \in \Z$. We can take the limit $s \to \infty$, and conclude that
    \[
    \left( \re^{ 2 \alpha n} u_n \right)_{n \in \Z} \in  \ell^2(\Z), \quad \text{with} \quad \left\| \left( \re^{ 2 \alpha n} u_n\right)_{n \in \Z} \right\|_{\ell^2(\Z)} \le 2 \| \bu \|_{\ell^\infty}^4 \dfrac{\re^{4\alpha M}}{1 - \re^{-4\alpha}}.
    \]
    This proves as wanted that the sequence $\bu$ is exponentially decaying at $+ \infty$.

    \section{Smoothness and Taylor expansion of the map $\cF_\bt$}
    
    In this section, we prove Lemma~\ref{lem:existence_contour}, which states that $\bh \mapsto (T+H)_-$ is smooth locally around $\bnull$, whenever $T$ is a homoclinic or heteroclinic configuration. 
    
    \medskip
    
    We first record a useful Lemma that we will use many times throughout the article. In what follows, we denote by $\cB := \cB(\ell^2(\Z))$ the set of bounded operators acting in $\ell^2(\Z)$, and by $\fS_p := \fS_p(\ell^2(\Z))$ the $p$--Schatten class: $A \in \fS_p$ iff A is a compact operator with $ \| A \|_{\fS_p} := \Tr( | A |^p )^{1/p} < + \infty$. The set $\fS_\infty$ is simply the set of compact operators, with $\| A \|_{\fS_\infty} = \| A \|_{\rm op}$.

    \begin{lemma} \label{lem:a_and_A}
        Let $\ba$ be a sequence from $\Z$ to $\R$, and let $A$ be the corresponding operator.
        \begin{itemize}
            \item If $\ba \in \ell^\infty$, then $A$ is a bounded operator ($A \in \cB$), and $\| A \|_{\rm op} \le 2 \| \ba \|_{\ell^\infty}$ ;
            \item If $\ba$ goes to $0$ at $\pm \infty$, then $A$ is compact ($A \in \fS_\infty$) ;
            \item If $\ba \in \ell^p(\Z)$ for some $1 \le p < \infty$, then $A$ is in the Schatten class $\fS_p$, and
            \[
            \| A \|_{\fS_p} \le 2 \| \ba \|_{\ell^p}.
            \]
        \end{itemize}
    \end{lemma}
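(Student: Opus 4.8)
The plan is to realize $A$ in terms of the bilateral shift on $\ell^2(\Z)$ and a diagonal multiplication operator, for which all three statements are immediate, and then to invoke the ideal properties of $\cB$ and of $\fS_p$ under multiplication by the (unitary) shift. Note first that in all three cases $\ba$ is bounded, since $\ell^p(\Z)\subset\ell^\infty(\Z)$ for $p<\infty$ and a sequence tending to $0$ at $\pm\infty$ is bounded; so in every case $D_\ba$ below makes sense as a bounded operator.

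First I would introduce on $\ell^2(\Z)$ the bilateral shift $S$ defined by $(S\psi)_n := \psi_{n-1}$ (so that $S e_m = e_{m+1}$ on the canonical basis $(e_n)_{n\in\Z}$), which is unitary with $S^*S = SS^* = \mathrm{Id}$, together with the diagonal multiplication operator $D_\ba$ given by $(D_\ba)_{n,m} := a_n\,\delta_{n,m}$, so that $\|D_\ba\|_{\rm op} = \sup_{n}|a_n| = \|\ba\|_{\ell^\infty}$. The key point, verified by testing both sides on the basis vectors $e_m$ (one gets $A e_m = a_{m-1}e_{m-1} + a_m e_{m+1}$, $D_\ba S^* e_m = a_{m-1}e_{m-1}$, and $SD_\ba e_m = a_m e_{m+1}$), is the identity
\[
    A = D_\ba S^* + S D_\ba = D_\ba S^* + (D_\ba S^*)^*,
\]
where the second equality uses $S^{**}=S$ and $D_\ba^* = D_\ba$ (the sequence $\ba$ is real-valued).

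For the first bullet, the triangle inequality together with the unitarity of $S$ (so that left or right multiplication by $S$ or $S^*$ preserves the operator norm) gives $\|A\|_{\rm op} \le \|D_\ba S^*\|_{\rm op} + \|SD_\ba\|_{\rm op} = 2\|D_\ba\|_{\rm op} = 2\|\ba\|_{\ell^\infty}$. For the second bullet, if $\ba\to 0$ at $\pm\infty$ then $D_\ba$ is the operator-norm limit, as $N\to\infty$, of the finite-rank operators $D_{\ba^{(N)}}$ where $\ba^{(N)}$ equals $\ba$ on $[-N,N]$ and vanishes outside, hence $D_\ba$ is compact; since the compact operators form a two-sided ideal of $\cB$, so is $A = D_\ba S^* + SD_\ba$. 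For the third bullet, if $\ba\in\ell^p(\Z)$ with $1\le p<\infty$, then $\Tr(|D_\ba|^p) = \sum_{n}|a_n|^p = \|\ba\|_{\ell^p}^p < \infty$, so $D_\ba\in\fS_p$ with $\|D_\ba\|_{\fS_p} = \|\ba\|_{\ell^p}$; using that $\fS_p$ is a two-sided ideal with $\|XY\|_{\fS_p}\le \|X\|_{\rm op}\|Y\|_{\fS_p}$ and $\|YX\|_{\fS_p}\le\|Y\|_{\fS_p}\|X\|_{\rm op}$, and that $\|\cdot\|_{\fS_p}$ obeys the triangle inequality for $p\ge 1$ (Minkowski's inequality for Schatten norms), I obtain $\|A\|_{\fS_p}\le \|D_\ba S^*\|_{\fS_p} + \|SD_\ba\|_{\fS_p} \le 2\|D_\ba\|_{\fS_p} = 2\|\ba\|_{\ell^p}$.

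There is no serious obstacle here: the only care needed is in checking the operator identity on basis vectors (keeping track of which index the shift moves) and in invoking Minkowski's inequality for Schatten $p$-norms, which is exactly where the hypothesis $p\ge 1$ is used (for $p<1$ one only has a quasi-norm, and the constant would degrade). An equally short alternative avoids the shift altogether: split $A = A_{\rm ev} + A_{\rm odd}$ into the parts carrying the couplings $a_n$ with $n$ even, respectively odd, each being an orthogonal direct sum of the $2\times2$ blocks $\left(\begin{smallmatrix} 0 & a_n\\ a_n & 0\end{smallmatrix}\right)$ whose two singular values both equal $|a_n|$; adding the block-wise $\fS_p$ norms and recombining the even/odd contributions via concavity of $x\mapsto x^{1/p}$ again yields the constant $2$, and the operator-norm bound follows the same way.
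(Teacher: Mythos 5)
Your proof is correct, and it takes a genuinely different route from the paper's. The paper handles the three bullets by three separate devices: a direct quadratic-form estimate $|\langle\psi,A\psi\rangle|\le 2\|\ba\|_{\ell^\infty}\|\psi\|^2$ for the operator-norm bound, a finite-rank truncation argument for compactness, a duality computation for the $\fS_1$ bound, and then Riesz--Thorin interpolation for Schatten classes to reach general $1<p<\infty$. You instead prove a single structural identity $A=D_\ba S^* + SD_\ba$ with $S$ the bilateral shift (unitary) and $D_\ba$ diagonal, from which all three bullets fall out at once by the ideal property of $\cB$, $\fS_\infty$, and $\fS_p$ under multiplication by bounded operators, together with the triangle inequality for the Schatten $p$-norms when $p\ge1$. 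What your approach buys is the elimination of the interpolation step: you never need the Riesz--Thorin theorem in Schatten form, only the fact that $\|UT\|_{\fS_p}=\|T\|_{\fS_p}$ for unitary $U$ and Minkowski's inequality for $\|\cdot\|_{\fS_p}$, and the constant $2$ appears for the same structural reason in every case. What the paper's approach buys is familiarity and modularity: the duality-plus-interpolation pattern is standard and routinizes the question of where the hypothesis $p\ge1$ enters (it is needed for Riesz--Thorin), whereas in your argument it enters through the convexity of $\|\cdot\|_{\fS_p}$ (or, in your block-diagonal variant, through concavity of $x\mapsto x^{1/p}$). Both proofs identify the same sharp constant $2$; your block-diagonal alternative also shows transparently that the constant $2$ is not improvable in general, since the even and odd sublattices produce two genuinely independent contributions.
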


    \begin{proof}
        For the first part, we note that, for all $\psi \in \ell^2(\Z)$, we have
        \begin{equation*} 
            \left| \langle \psi, A \psi \rangle_{\ell^2} \right| = \left| \sum_{n \in \Z} a_n (\overline{\psi_n} \psi_{n+1} + \overline{\psi_{n+1}} \psi_n ) \right|
            \le \| \ba \|_{\ell^\infty} \sum_{n \in \Z} \left( |\psi_n|^2 + |\psi_{n+1} |^2 \right) = 2 \| \ba \|_{\ell^\infty} \| \psi \|_{\ell^2}^2,
        \end{equation*}
        where we used that $\overline{a}b + a \overline{b} \le | a |^2 + | b |^2$ in the middle inequality. \\
        For the second part, we note that the operator $A$ is the limit, for the operator norm, of the finite-rank operators $A^N$ associated with the truncated configurations $a^N:=(\mathbf{1}_{-N\leq n \leq N}\, a_n)_{n\in \Z}$. Hence $A$ is compact. \\
        For the last part, we first prove the result for $p = 1$. We have, by duality,
        \begin{align*}
            \| A \|_{\fS_1} & = \sup_{K \in \cB \atop \| K \|_{\rm op} = 1} | \Tr(A K) | 
            =  \sup_{K \in \cB \atop \| K \|_{\rm op} = 1} \left| \sum_{n \in \Z} a_n (K_{n+1, n} + K_{n, n+1}) \right|\\ 
            & \le \| \ba \|_{\ell^1} \sup_{K \in \cB \atop \| K \|_{\rm op} = 1} \sup_{n \in \Z}\, (|K_{n+1, n}|  + | K_{n, n+1} |) \le 2 \| \ba \|_{\ell^1}.
        \end{align*}
        We used in the last line that $| K_{n,n+1}| = | \langle e_n, K e_{n+1} \rangle | \le \| K \|_{\rm op}$. Finally, to conclude the proof, we proceed by interpolation using Riesz-Thorin interpolation theorem for Schatten spaces (see~\cite[Remark 1 p.23]{Sim05} and~\cite[p.115]{ReeSim-77} for the version with $\cB$ instead of $\fS_\infty$).
    \end{proof}

    \subsection{The spectrum of homoclinic and heteroclinic configurations}
    \label{ssec:homoclinic_heteroclinic}
    
    In order to prove that $\cF_\bt$ is smooth, we first study the spectrum of the operator $T$ when $\bt$ is such a configuration. We treat the two cases separately.
    
    \subsubsection{Homoclinic configurations} Let $\bt$ be a homoclinic configuration with $\bt \ge \tau$ for some $\tau > 0$. Then, we can write $\bt = \bt^+ + \bu$, where we recall that $\bt^+$ is the dimerized configuration $t_n^+ = W + (-1)^n \delta$ with $\delta > 0$, and where the sequence $\bu$ goes to $0$ at $\pm \infty$.
    
    \medskip
    
    We have $T = T^+ + U$. The operator $T^+$ has purely essential spectrum, of the form (see for instance~\cite{GonKouSer23} and references therein)
    \[
    \sigma \left( T^+ \right) = \sigma_{\rm ess} \left( T^+ \right) = [-2W, -2 \delta] \cup [2 \delta, 2W].
    \]
    In particular, $T^+$ has a spectral gap of size $4 \delta$ around $0$. On the other hand, since $\bu$ goes to $0$ at $\pm \infty$, $U$ is compact, see Lemma~\ref{lem:a_and_A}. We thus deduce from Weyl's theorem that 
    \begin{equation} \label{eq:equality_spectra}
        \sigma_{\rm ess}(T) = \sigma_{\rm ess}(T^+) = [-2W, -2 \delta] \cup [2 \delta, 2W].
    \end{equation}
    In particular, $0 \notin \sigma_{\rm ess}(T)$. In addition, we claim that $0$ is not an eigenvalue of $T$. More specifically, we have the following.
    
    \begin{lemma} \label{lem:zero_mode}
        Let $\bt$ be {\bf any} configuration with $\bt \ge \tau$ for some $\tau > 0$ (in particular, all coefficients $t_n$ are non null). Assume there is $N_0 \in \N$ and $0 < \kappa < 1$ so that
        \[
        \text{\bf (Homoclinic case)} \quad \forall n \ge N_0, \quad \left| \frac{t_{2n+1}}{t_{2n}} \right| \le \kappa\quad \text{and} \quad
        \left| \frac{t_{-2n-1}}{t_{-2n}} \right| \le \kappa.
        \]
        Then $0$ is not an eigenvalue of $T$. Conversely, if
        \[
        \text{\bf (Heteroclinic case)} \quad \forall n \ge N_0, \quad \left| \frac{t_{2n+1}}{t_{2n}} \right| \le \kappa
        \quad \text{and} \quad
        \left| \frac{t_{-2n}}{t_{-2n+1}} \right| \le \kappa,
        \]
        then $0$ is an eigenvalue of $T$ of multiplicity $1$.
    \end{lemma}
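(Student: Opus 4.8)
The starting point is the explicit structure of the zero-mode equation $T\psi = 0$. Writing out the three-term recursion, the equation $(T\psi)_n = t_{n-1}\psi_{n-1} + t_n \psi_{n+1} = 0$ decouples into two independent recursions, one for the even-indexed entries and one for the odd-indexed entries, since each equation links $\psi_{n-1}$ and $\psi_{n+1}$. Concretely, fixing $(T\psi)_{2n}=0$ for all $n$ gives $t_{2n-1}\psi_{2n-1} = -t_{2n}\psi_{2n+1}$, i.e. $\psi_{2n+1} = -(t_{2n-1}/t_{2n})\psi_{2n-1}$, so the odd entries are determined recursively by $\psi_1$ via $\psi_{2n+1} = (-1)^n \bigl(\prod_{k=1}^n t_{2k-1}/t_{2k}\bigr)\psi_1$ for $n \ge 0$, and similarly going backwards. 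The even entries are determined analogously from $\psi_0$ using the equations $(T\psi)_{2n+1}=0$, which read $\psi_{2n+2} = -(t_{2n}/t_{2n+1})\psi_{2n}$. So the space of formal solutions is two-dimensional, parametrized by $(\psi_0,\psi_1)$, and the question is purely about which of these formal solutions lie in $\ell^2(\Z)$.

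For the homoclinic case: the hypothesis $|t_{2n+1}/t_{2n}|\le \kappa < 1$ for $|n|\ge N_0$ controls the odd-indexed solution. Going to $+\infty$, the ratios $t_{2k-1}/t_{2k}$ are $\le\kappa$ eventually, so $|\psi_{2n+1}|$ decays at least geometrically like $\kappa^n$ — good, that tail is $\ell^2$. But going to $-\infty$ we need to look at the reversed recursion $\psi_{2n-1} = -(t_{2n}/t_{2n-1})\psi_{2n+1}$, whose multipliers are $t_{2n}/t_{2n-1} = (t_{2n-1}/t_{2n})^{-1} \ge \kappa^{-1} > 1$; hence the odd solution that is $\ell^2$ at $+\infty$ \emph{grows} geometrically at $-\infty$ and is not in $\ell^2(\Z)$ unless $\psi_1 = 0$. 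The same dichotomy applies to the even-indexed solution with the roles of $\pm\infty$ swapped (again using that all $t_n$ are bounded above by boundedness of the configuration, to turn "ratio $\ge \kappa^{-1}$" into genuine geometric growth of $|\psi|$, and bounded below by $\tau$ so no division issues). Wait — I should double-check the even case: the even recursion uses ratios $t_{2n}/t_{2n+1}$, whose reciprocal is $t_{2n+1}/t_{2n}\le\kappa$, so $|t_{2n}/t_{2n+1}|\ge\kappa^{-1}$, meaning the even solution grows at $+\infty$ and must have $\psi_0=0$. Either way both free parameters are forced to vanish, so $\ker T = \{0\}$, i.e. $0$ is not an eigenvalue.

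For the heteroclinic case the sign pattern is exactly reversed on one side. The hypotheses say $|t_{2n+1}/t_{2n}|\le\kappa$ for $n\ge N_0$ (so the odd solution decays at $+\infty$ and the even solution grows at $+\infty$, forcing $\psi_0=0$ there if we want $\ell^2$ at $+\infty$) and $|t_{-2n}/t_{-2n-1}|\le\kappa$ for $n\ge N_0$, i.e. $|t_m/t_{m+1}|\le\kappa$ for $m$ very negative even — this is precisely the condition making the \emph{even}-indexed solution decay as $n\to-\infty$, while the odd-indexed solution grows as $n\to-\infty$. So: the odd solution is $\ell^2$ at $+\infty$ but blows up at $-\infty$ (kill $\psi_1$); the even solution blows up at $+\infty$ but is $\ell^2$ at $-\infty$... hmm, that would again give nothing. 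Let me recount: I want a nonzero $\ell^2$ solution. The odd solution: $\ell^2$ at $+\infty$ (ratios $\le\kappa$), and at $-\infty$ the backward multipliers are reciprocals of $t_{2n-1}/t_{2n}$; near $-\infty$ the hypothesis is on $t_{-2n}/t_{-2n-1}$, i.e. on $t_{\text{even}}/t_{\text{odd}+1}$, so the odd solution's multiplier there is $t_{2n}/t_{2n-1} = $ (with $2n$ negative even) $= t_{-2m}/t_{-2m-1}\le\kappa$ after reindexing — so actually the odd solution \emph{also} decays at $-\infty$! Thus the odd-indexed formal solution is genuinely in $\ell^2(\Z)$, giving a one-dimensional kernel; and one checks the even-indexed solution is not $\ell^2$ (it grows at both ends, or at least at one end, under the same hypotheses), so the kernel is exactly one-dimensional.

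The main obstacle, and the part requiring care, is bookkeeping the index parities and the direction of the recursion so that the inequalities point the right way — it is easy to get confused about whether a given hypothesis controls the even or the odd sublattice solution, and in which direction. The cleanest way to organize this is to introduce, for each parity class, the transfer ratios $r_n := \psi_{n+2}/\psi_n = -t_{n+1}/t_n \cdot (\text{adjacent ratio})$... actually more simply: note $(T\psi)_n=0$ gives $\psi_{n+1}/\psi_{n-1} = -t_{n-1}/t_n$, and track $\log|\psi_{2n+j}|$ as a telescoping sum $\sum \log(t_{\cdot}/t_{\cdot})$, bounding it above/below by $\mp (n - N_0)\log(1/\kappa) + O(1)$ using the hypotheses together with $\tau \le t_n \le \|\bt\|_{\ell^\infty}$. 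Then $\ell^2$-membership at each end reduces to the sign of that linear term. I would also remark that boundedness of $\bt$ is used to ensure the $O(1)$ error from the first $N_0$ terms is genuinely bounded and that geometric decay/growth rates are not spoiled.
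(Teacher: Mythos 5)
Your argument follows the same route as the paper's: solve $T\psi=0$ separately on the even and odd sublattices (parametrizing the formal solution space by $(\psi_0,\psi_1)$) and decide $\ell^2$-membership from the geometric growth/decay of each branch at each end of $\Z$. The reasoning and conclusions are correct, and your closing remark about telescoping $\log|\psi_{2n+j}|$ with the two-sided bound $\tau\le t_n\le\|\bt\|_{\ell^\infty}$ is actually the crucial technical device; be careful, however, that the odd-sublattice forward multiplier is $t_{2m-1}/t_{2m}$, which is \emph{not} of the hypothesized form $t_{2n+1}/t_{2n}$ (the odd index sits on the wrong side of the even one), so the product must genuinely be re-indexed, producing a boundary factor of the form $t_{1}/t_{2n}$ that is controlled precisely by $\tau\le t_n\le\|\bt\|_{\ell^\infty}$ — this is needed for the exponential rate itself, not merely for the $O(1)$ contribution of the first $N_0$ terms as you suggest.
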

    
    For a homoclinic (resp. heteroclinic) configuration, the first (resp. second) condition is satisfied with $\kappa = \frac{W - \delta/2}{W + \delta/2} < 1$.
    
    \begin{proof}
        The eigenvalue equation $T \psi = 0$ reads 
        \[
        \forall n\in \Z, \quad t_n \psi_{n} + t_{n+1} \psi_{n+2} = 0.
        \]
        For $n>0$ we obtain directly
        \[
        \psi_{2n}= (-1)^n \prod_{m=1}^{n} \left(\frac{t_{2m-2}}{t_{2m-1}}\right) \psi_0, \quad
        \psi_{2n + 1} = (-1)^n \prod_{m=1}^{n}\left(\frac{t_{2m - 1} }{t_{2m}}\right) \psi_1,
        \]
        and similar formulas hold for $n<0$. The vector space $ \{\psi, \ T \psi = 0\}$ is therefore two-dimensional, since a solution of the equation $T \psi = 0$ can be recovered from its values $\psi_0$ and $\psi_1$. The question is to find $\Ker(T) = \{T \psi = 0\}\cap\ell^2(\Z)$. 
        
        \medskip
        
        Let us first consider the homoclinic case, and let $\psi \in \{ T \psi = 0 \}$. Since $|t_{2n}/t_{2n+1}| \ge \kappa^{-1} > 1$ for $n \ge N_0$, we have $| \psi_{2N_0+2k} | \ge | \psi_{2 N_0} | \kappa^{-k}$ as $k \to \infty$, so $\psi$ cannot be square integrable at $+ \infty$, unless $\psi_{2 N_0} = 0$, which is equivalent to $\psi_0 = 0$. Similarly, we have $| \psi_{-2N_0 - 2k + 1} | \ge | \psi_{-2 N_0 + 1} | \kappa^{-k}$ as $k \to \infty$, so $\psi$ cannot be square integrable at $-\infty$, unless $\psi_{-2 N_0 + 1} = 0$, which gives $\psi_1 = 0$ as well. So $\Ker(T) = \{0\}$.
        
        \medskip
        
        In the heteroclinic case, the same reasoning shows that we must have $\psi_0 = 0$. However, given $\psi_1 \in \R$, the function $\psi$ with $\psi_{2n+1} = (-1)^n \prod_{m=1}^{n}\left(\frac{t_{2m - 1} }{t_{2m}}\right) \psi_1$ for $n$ positive, $\psi_{2n+1} = (-1)^n \prod_{m=1}^{\vert n\vert}\left(\frac{t_{-2m +2} }{t_{-2m+1}}\right) \psi_1$ for $n$ negative and $\psi_{2n} = 0$ for all $n$, is a square integrable non null eigenvector. In this case, $\dim \Ker(T) = 1$.
    \end{proof}
    
    \begin{remark} \label{rem:zero_mode}
        In the heteroclinic case, the corresponding normalized eigenvector $\psi$ is sometimes called an {\em edge state}, or {\em interface state} or {\em zero mode}. As shown in the proof, it is exponentially decaying at $\pm \infty$: there is $C \ge 0$ and $\beta := - \log(\kappa) > 0$ so that $| \psi_n | \le C \re^{- \beta | n |}$. It is always exponentially decaying, even though the sequence $\bt$ may converge to $\bt^\pm$ very slowly at $\pm \infty$. Actually, we do not require $\bt$ to be a critical point here. 
        
        Note that it is only supported on the odd integers: $\psi_{2n} = 0$ for all $n \in \Z$. In particular, the corresponding projector $Z := | \psi \ket \bra \psi |$ satisfies
        \[
        \forall n \in \Z, \qquad Z_{n, n+1} = Z_{n+1, n}= 0.
        \]
    \end{remark}
    
    Let us return to the homoclinic case. We proved that $0 \notin \sigma(T)$. Let $g := {\rm dist}(0, \sigma(T))$ be the distance between $0$ and the spectrum of $T$, and set $\eta := g/8$. Let $\bh$ be any perturbation with $\| \bh \|_\infty \le \eta$. Then $\| H \|_{\rm op} \le 2 \eta$ by Lemma~\ref{lem:a_and_A}. In particular, the spectrum of $T + H$ is $2\eta$--close to the one of $T$, hence $\sigma(T + H) \cap (-3g/4, 3g/4) = \emptyset$. 
    
    \medskip
    
    Let us consider the positively oriented contour $\sC$ in~\eqref{eq:def:sC}. We deduce first that $(z - (T+H))$ is invertible for all $z \in \sC$, and with $\| (z - (T+H))^{-1} \|_{\rm op} \le C$ for a constant $C$ independent of $z \in \sC$. Also, from the Cauchy residual formula, we have
    \[
    \1(T + H < 0)  = \frac{1}{2 \ri \pi} \oint_{\sC} \dfrac{\rd z}{z - (T+H)}, 
    \]
    and
    \[
    (T + H)_- = - (T+H) \1(T + H < 0)  = \frac{-1}{2 \ri \pi} \oint_{\sC} \dfrac{z}{z - (T+H)} \rd z. 
    \]

    \subsubsection{Heteroclinic configurations} \label{Heteroclinic_configurations}
    
    Now, let $\bt$ be a heteroclinic configuration with $\bt \ge \tau$ for some $\tau > 0$. First, we claim that $0 \notin \sigma_{\rm ess}(T)$.

    \begin{lemma}
        Let $\bt$ be a heteroclinic configuration. Then 
        \[
        \sigma_{\rm ess}(T) = [-2W, -2 \delta] \cup [2 \delta, 2 W].
        \]
        In particular, $0 \notin \sigma_{\rm ess}(T)$.
    \end{lemma}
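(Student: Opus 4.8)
The plan is to reduce the heteroclinic case to the homoclinic (or rather the periodic) case by a decoupling argument at the interface. Write $\bt$ as a heteroclinic configuration, so $t_n \to t_n^-$ as $n \to -\infty$ and $t_n \to t_n^+$ as $n \to +\infty$. The key observation is that $\bt^-$ and $\bt^+$ are \emph{shifts} of one another: $t_n^- = t_{n+1}^+$. Fix a large integer $N$, and split $\Z = (-\infty, -N] \cup [-N, N] \cup [N, +\infty)$. Define a comparison operator $T_\flat$ obtained from $T$ by setting the two hopping amplitudes $t_{-N-1}$ and $t_{N}$ (the bonds joining the middle block to the two half-lines) to zero, so that $T_\flat = T_- ^{\rm left} \oplus T_{\rm mid} \oplus T_+^{\rm right}$ is a direct sum of a half-line operator that agrees with $T^-$ near $-\infty$, a finite $(2N+1)\times(2N+1)$ matrix, and a half-line operator that agrees with $T^+$ near $+\infty$. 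The difference $T - T_\flat$ is a rank-two (hence finite-rank, hence compact) operator.

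Next I would identify the essential spectra of the three pieces. By Weyl's theorem, $\sigma_{\rm ess}(T) = \sigma_{\rm ess}(T_\flat)$, and since a finite matrix contributes nothing to the essential spectrum, $\sigma_{\rm ess}(T_\flat) = \sigma_{\rm ess}(T_+^{\rm right}) \cup \sigma_{\rm ess}(T_-^{\rm left})$. Now $T_+^{\rm right}$ is a half-line Jacobi operator whose coefficients converge to those of the two-periodic operator $T^+$; writing it as $T_{+,\rm per}^{\rm right} + (\text{compact})$ where $T_{+,\rm per}^{\rm right}$ is the half-line restriction of the periodic operator $T^+$ with a fixed boundary condition, Weyl's theorem again gives $\sigma_{\rm ess}(T_+^{\rm right}) = \sigma_{\rm ess}(T_{+,\rm per}^{\rm right})$. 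For a two-periodic half-line Jacobi operator the essential spectrum equals that of the whole-line periodic operator, namely $[-2W,-2\delta]\cup[2\delta,2W]$ (this is the standard fact that restricting a periodic Jacobi operator to a half-line adds at most discrete spectrum; it is also already recorded in the excerpt that $\sigma(T^+) = \sigma_{\rm ess}(T^+) = [-2W,-2\delta]\cup[2\delta,2W]$). The same computation applies verbatim to the left half-line with $T^-$ in place of $T^+$, and since $\bt^-$ is just a shift of $\bt^+$ we get the same interval. Combining, $\sigma_{\rm ess}(T) = [-2W,-2\delta]\cup[2\delta,2W]$, and in particular $0 \notin \sigma_{\rm ess}(T)$ because $\delta > 0$.

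An alternative, slightly slicker route avoids half-line operators: one can compare $T$ directly with the \emph{whole-line} operator $\widehat T := T^- \1(n \le 0) \oplus T^+ \1(n \ge 1)$, i.e. the operator whose coefficients are $t_n^-$ for $n \le -1$, $t_n^+$ for $n \ge 1$, and with the bond $t_0$ across the origin removed (set to $0$). Since $t_n \to t_n^\mp$ at $\mp\infty$, the difference $T - \widehat T$ has coefficients going to $0$ at $\pm\infty$ plus a rank-one piece, so it is compact by Lemma~\ref{lem:a_and_A}, and $\sigma_{\rm ess}(T) = \sigma_{\rm ess}(\widehat T)$. Then $\widehat T = \widehat T^{\,\rm left} \oplus \widehat T^{\,\rm right}$ splits as a direct sum of two half-line periodic operators, each with essential spectrum $[-2W,-2\delta]\cup[2\delta,2W]$, and one concludes as before. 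I would probably present this version since it is cleaner and uses only the already-proved Lemma~\ref{lem:a_and_A} and Weyl's theorem.

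The main obstacle, and the only point requiring genuine care, is the claim that a half-line restriction of the two-periodic operator $T^+$ has essential spectrum exactly $[-2W,-2\delta]\cup[2\delta,2W]$, i.e. that cutting the chain at one point does not create or destroy essential spectrum. This follows because the half-line and full-line periodic operators differ by a rank-one (or finite-rank, depending on boundary convention) perturbation, so Weyl's theorem applies; the only subtlety is bookkeeping the boundary bond correctly so the perturbation is genuinely finite-rank. Everything else is a routine assembly of Weyl's theorem, Lemma~\ref{lem:a_and_A}, and the known spectrum of the periodic operator. I do not expect the symmetry of $\sigma(T)$ about $0$ (mentioned in the figure caption) to be needed here; it is a separate, easy consequence of the chiral symmetry $S T S = -T$ with $S$ the diagonal operator $S_{nn} = (-1)^n$.
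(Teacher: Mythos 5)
Your second, ``slicker'' version is essentially the paper's proof: cut the single bond $t_0$ to obtain a finite-rank perturbation, apply Weyl's theorem, split the decoupled operator into a direct sum of two half-line operators, absorb the decaying tails $t_n - t_n^\pm \to 0$ into a further compact perturbation via Lemma~\ref{lem:a_and_A}, and reduce to the essential spectrum of the periodic half-line operators. So the overall architecture matches the paper's exactly, and your first version (cutting two bonds around a middle block) is just a more elaborate way of doing the same thing.

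The one spot where your argument is loose is the final step, which you yourself flag as the only point requiring genuine care. You assert that the half-line and full-line periodic operators ``differ by a rank-one (or finite-rank) perturbation, so Weyl's theorem applies.'' Taken literally this cannot be right: the half-line operator acts on $\ell^2(\Z^+)$ and the full-line one on $\ell^2(\Z)$, so they are not related by any perturbation. What is true, and what you presumably intend, is that the decoupled operator $\widetilde{T}^+_L \oplus \widetilde{T}^+_R$ on $\ell^2(\Z)$ differs from $T^+$ by a finite-rank perturbation, whence Weyl gives
\[
\sigma_{\rm ess}(T^+) \;=\; \sigma_{\rm ess}\bigl(\widetilde{T}^+_L\bigr) \cup \sigma_{\rm ess}\bigl(\widetilde{T}^+_R\bigr).
\]
But this identifies only the \emph{union}; by itself it does not force each summand to have essential spectrum equal to $[-2W,-2\delta]\cup[2\delta,2W]$. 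The paper closes this with a short symmetry argument: since the dimerized configuration satisfies $t_{-n}^+ = t_n^+$, the reflection $n \mapsto 1-n$ furnishes a unitary equivalence between $\widetilde{T}^+_L$ and $\widetilde{T}^+_R$; their essential spectra therefore coincide, and each must equal the whole union. (One could instead run a translated Weyl-sequence argument, which is the honest content of the ``standard fact'' you invoke, but the reflection trick is cleaner here and exploits structure you already have in hand.) Apart from this one step, your proposal and the paper's proof coincide.
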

    
    \begin{proof}
        Introduce the sequence $\widetilde{\bt}$ with $\widetilde{t_n} = t_n$ if $n \neq 0$, and $\widetilde{t_0} = 0$. We denote by $\widetilde{T}$ the corresponding operator, and set $K := T - \widetilde{T}$. Then $T = \widetilde{T} + K$ with
        {  \[  \arraycolsep=3pt \def\arraystretch{0.7}
           \widetilde{T} = 
            \left(
            \begin{array}{ccc|ccc}
                \ddots   & \ddots    &  0     &          &    &  \\
                \ddots    & 0      & t_{-1}    & 0    &         &        \\
                0      & t_{-1}    & 0      & 0       &  0  &         \\
                \hline 
                & 0 & 0 & 0    & t_1   & 0    \\
                &       & 0  &  t_{1} & 0        & \ddots   \\
                &       &  & 0         & \ddots & \ddots
            \end{array}
            \right)
            ,\quad K=
            \left(
            \begin{array}{ccc|ccc}
                \ddots   & \ddots    &  0     &          &    &  \\
                \ddots    & 0      & 0    & 0    &         &        \\
                0      & 0   & 0      & t_0       &  0  &         \\
                \hline 
                & 0 & t_0 & 0    & 0   & 0    \\
                &       & 0  &  0 & 0        & \ddots   \\
                &       &  & 0         & \ddots & \ddots
            \end{array}
            \right).
            \]}
        \medskip
        
        The operator $K$ is of rank $2$, hence is compact, so $\sigma_{\rm ess}(T) = \sigma_{\rm ess}(\widetilde{T})$ by Weyl's theorem. In addition, the operator $\widetilde{T}$ is of the form $\widetilde{T} = \widetilde{T}_L \oplus \widetilde{T}_R$ acting on $\ell^2(\Z) \approx \ell^2(\Z^-) \oplus \ell^2(\Z^+)$, hence
        \[
        \sigma_{\rm ess}(\widetilde{T}) = \sigma_{\rm ess}(\widetilde{T}_L) \cup \sigma_{\rm ess}(\widetilde{T}_R).
        \]
        Let us first focus on the right operator $\widetilde{T}_R$. The hopping amplitudes $\widetilde{t_n}$ for $n \ge 1$ are of the form $\widetilde{t_n} = t_n^+ + u_n$ with $\lim_{n \to \infty} u_n = 0$. So, with obvious notation, $\widetilde{T}_R = \widetilde{T}^+_R + U_R$. The sequence $(u_n)$ goes to zero, so $U_R$ is a compact operator (the proof is similar than for Lemma~\eqref{lem:a_and_A}), and $\sigma_{\rm ess}(\widetilde{T}_R) = \sigma_{\rm ess}(\widetilde{T}^+_R)$. Finally, reasoning as before and introducing the cut compact operator $K^+ := T^+ - \widetilde{T}^+$, we have
        \[
        \sigma(T^+) = \sigma_{\rm ess}(T^+) = \sigma_{\rm ess}(\widetilde{T}^+_L) \cup \sigma_{\rm ess}(\widetilde{T}^+_R).
        \]
        In addition, since $t_{-n}^+ = t_n^+$ for the dimerized configuration $\bt^+$, $\widetilde{T}^+_L$ is unitarily equivalent to $\widetilde{T}^+_R$, and in particular 
        \[
        \sigma_{\rm ess}(\widetilde{T}^+_R) = \sigma_{\rm ess}(\widetilde{T}^+_L) = [-2W, -2 \delta] \cup [2 \delta, 2 W].
        \]
        Altogether, we proved that $\sigma_{\rm ess}(\widetilde{T}_R) = [-2W, -2 \delta] \cup [2 \delta, 2 W]$.
        The proof for the left part is similar, upon replacing $T^+$ by $T^-$.
    \end{proof}
    
    \medskip
    
    In addition, using Lemma~\ref{lem:zero_mode}, we know that $0$ is an eigenvalue of $T$ of multiplicity $1$. So $0$ is an isolated eigenvalue, and we set 
    \[
    g := {\rm dist} \left( 0, \sigma(T) \setminus \{ 0 \} \right) \, >\, 0
    \quad \text{and} \quad
    \eta := \min \left\{  \frac{g}{8}, \frac{\tau}{2}, \frac{\delta}{2} \right\} > 0.
    \]
    By standard perturbation theory, for all $\bh$ with $\| \bh \|_{\ell^\infty} \le \eta$ (hence $\| H \|_{\rm op} \le 2 \eta$ by Lemma~\ref{lem:a_and_A}), the spectrum of $T + H$ is composed of an isolated eigenvalue $\lambda_0(T+H)$ of multiplicity $1$, with $| \lambda_0(T + H) | \le 2 \eta \le g/4$ corresponding to the perturbation of the $0$ eigenvalue of $T$, and the rest of the spectrum, at distance at least $g - 2 \eta > 3g/4$ from $0$.
    
    \medskip
    
    Since $\| \bh \|_{\ell^\infty} < \tau/2$ and $\| \bh \|_{\ell^\infty} < \delta/2$, the vector $\bt + \bh$ satisfies $\bt + \bh \ge \tau/2 > 0$ and $(t_n + h_n) \in (t_n - \delta/2, t_n + \delta/2)$. In particular, it satisfies the assumption of Lemma~\ref{lem:zero_mode} (heteroclinic case) with $\kappa = \frac{W - \delta/2 }{W + \delta/2} < 1$. So $\lambda_0(T + H) = 0$: the eigenvalue $0$ is unperturbed by the addition of $H$.
    
    \medskip
    
    We consider the positively oriented contour $\sC$ defined in~\eqref{eq:def:sC}. We deduce from the previous discussion that, for all $\bh$ with $\| \bh \|_{\ell^\infty} \le \eta$, we have
    \[
    (T + H)_- = - (T+H) \1(T + H < 0)  = \frac{-1}{2 \ri \pi} \oint_{\sC} \dfrac{z}{z - (T+H)} \rd z,
    \]
    where all operators appearing are uniformly bounded by some constant $C \ge 0$ independent of $z \in \sC$. We also remark that we have
    \[
    \1(T + H < 0) = \frac{1}{2 \ri \pi} \oint_{\sC} \dfrac{\rd z}{z - (T+H)},
    \quad \text{and} \quad
    \1(T + H \le 0) = \1 (T + H < 0) + Z,
    \]
    where $Z = | \psi \rangle \langle \psi |$ is the rank--$1$ projector onto the normalized zero-mode $\psi \in {\rm Ker}(T + H)$, see Remark~\ref{rem:zero_mode}.

    \subsection{Taylor expansion of the map $\cF_\bt$}
    \label{ssec:TaylorF}
    
    In this section, we study the energy $\cF_\bt$ in~\eqref{eq:def:F}, and prove Lemma~\ref{lem:F_is_smooth}. Recall that $\cF_\bt$ is defined by
    \begin{equation*}
        \cF_\bt(\bh):= \frac{\mu}{2}\sum_{n \in \Z} (h_n + 2t_n - 2)h_n - 2 \Tr( (T + H)_- - T_- ).
    \end{equation*}
    In what follows, $\bt$ is a homoclinic or heteroclinic configuration with $\bt \ge \tau$ for some $\tau > 0$. We introduce the constant $\eta > 0$ and the contour $\sC$ as in the previous section. 
    
    \medskip
    
    First, we claim that for all $\bh$ with $\| \bh \|_{\ell^1} \le \eta$ (we now use the $\ell^1$ norm), the map $\cF_\bt(\bh)$ is well-defined and finite. Since $\| \bh \|_{\ell^\infty } \le \| \bh \|_{\ell^1} \le \eta$, $\bh$ satisfies the conditions of the previous section. For the first part of the energy, we write that
    \[
    \sum_{n \in \Z} h_n^2 = \| \bh \|_{\ell^2}^2 \le \| \bh \|_{\ell^1}^2, 
    \quad \text{and} \quad
    \left| \sum_{n \in \Z} (2t_n - 2) h_n \right| \le 
    \left( 2 \| \bt \|_{\ell^\infty} + 2 \right) \| \bh \|_{\ell^1},
    \]
    so the first part is continuous from $\ell^1$ to $\R$. For the second part, we use the Cauchy residual formula, and get that
    \[
    (T + H)_- - T_- 
    = \dfrac{-1}{2 \ri \pi} \oint_{\sC} \left( \dfrac{z}{z - (T+ H)} - \dfrac{z}{z - T} \right) \rd z = \dfrac{-1}{2 \ri \pi} \oint_{\sC} \left( \dfrac{1}{z - (T+ H)} H \dfrac{1}{z - T} \right) z \rd z,
    \]
    where we used the resolvent formula in the last equality. In particular, for all $\bh \in \ell^1$ with $\| \bh \|_{\ell^1} \le \eta$ and all $z \in \sC$, we have, using Lemma~\ref{lem:a_and_A},
    \begin{equation} \label{eq:typical_bound}
        \left\| \dfrac{1}{z - (T+ H)} H \dfrac{1}{z - T} \right\|_{\fS_1} 
        \le \left\| \dfrac{1}{z - (T+ H)} \right\|_{\rm op} \left\| H \right\|_{\fS_1} \left\| \dfrac{1}{z - T} \right\|_{\rm op} \le 2 C^2  \| \bh \|_{\ell^1}.
    \end{equation}
    Integrating $z$ in the compact $\sC$ finally shows that $\cF_\bt(\bh)$ is well-defined and continuous around $\bnull$ for the $\ell^1$ norm.
    
    \medskip
    
    We can push the resolvent formula, and write that
    \[
    \frac{1}{z-(T+H)} - \frac{1}{z-T} =  \sum_{n=1}^\infty  \frac{1}{z-T} \left(H  \frac{1}{z-T} \right)^n,
    \]
    where the sum on the right is absolutely convergent in $\cB$ whenever
    \[
    \sup_{z \in \sC} \left\| \frac{1}{z-T}H \right\|_{\rm op} < 1, 
    \]
    which happens whenever $\| \bh \|_{\ell^\infty}$ is small enough, according to Lemma~\ref{lem:a_and_A}. Actually, it is also absolutely convergent in $\fS_1$ whenever $\| \bh \|_{\ell^1}$ is small enough. We deduce directly that $\bh \mapsto \cF_\bt(\bh)$ is analytic on an $\ell^1$ neighborhood of $\bnull$.
    
    \medskip
    
    Let us compute the differential and hessian of this map. We write
    \[
    \cF_\bt(\bh) = L_\bt(\bh) + \frac12 H_\bt(\bh, \bh) + R_\bt(\bh),
    \]
    with the linear form (differential) $L_\bt$ on $\ell^1(\Z)$, defined by
    \[
    L_\bt(\bh) := \mu \sum_{n \in \Z} (t_n - 2) h_n + 2 \Tr \left( \dfrac{1}{2 \ri \pi} \oint_{\sC}  \frac{1}{z-T}H\frac{1}{z-T}  z \rd z \right),
    \]
    the bilinear form (hessian) $H_\bt$ on $\ell^1(\Z) \times \ell^1(\Z)$, defined by
    \[
    H_\bt(\bh, \bk) := \mu \sum_{n \in \Z} h_n k_n + 4  \Tr \left(  \dfrac{1}{2 \ri \pi} \oint_{\sC} \frac{1}{z-T}H\frac{1}{z-T} K \frac{1}{z - T} z \rd z \right) ,
    \]
    and the rest
    \[
    R_\bt(\bh) :=  2\Tr \left( \dfrac{1}{2 \ri \pi} \oint_{\sC}  \left[\frac{1}{z-T}H \right]^3 \frac{1}{z - T} z \rd z \right) .
    \]
    
    Reasoning as in~\eqref{eq:typical_bound}, we see that $| R_\bt(\bh) | \le C \| \bh \|_{\ell^3}^3 \le C \| \bh \|_{\ell^2}^3$. Similarly, we have
    \[
    \left| H_\bt(\bh, \bk) \right| \le C \| H \|_{\fS_2} \| K \|_{\fS_2} \le C' \| \bh \|_{\ell^2} \| \bk \|_{\ell^2},
    \]
    so the bilinear form $H_\bt$ can be extended continuously on $\ell^2(\Z)$. 
    
    \medskip
    
    To end the proof of Lemma~\ref{lem:F_is_smooth}, it remains to simplify the expressions of $L_\bt$ and $H_\bt$. We use the following result.
    
    \begin{lemma}
        We have
        \begin{equation} \label{eq:simplication_L}
            \Tr \left(  \dfrac{1}{2 \ri \pi} \oint_{\sC} \frac{1}{z - T} H \dfrac{1}{z - T}  z \rd z  \right) =
            \Tr \left( \dfrac{1}{2 \ri \pi} \oint_{\sC} \frac{1}{z - T} H \rd z  \right)  = \Tr(\Gamma_\bt H)
        \end{equation}
        and
        \begin{equation} \label{eq:simplication_H}
            4 \Tr \left( \dfrac{1}{2 \ri \pi} \oint_{\sC}   \frac{1}{z - T} H \dfrac{1}{z - T} K \dfrac{1}{z - T}z \rd z \right)    =  2 \Tr \left( \dfrac{1}{2 \ri \pi} \oint_{\sC} \dfrac{1}{z - T}H \dfrac{1}{z - T} K \rd z \right) .
        \end{equation}
    \end{lemma}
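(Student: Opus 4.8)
The plan is to exploit the cyclicity of the trace together with two standard contour-integral identities: the resolvent identity $\frac{1}{z-T}\frac{1}{z-T} = \frac{\rd}{\rd z}\frac{1}{z-T}$ (up to sign), and an integration by parts in $z$ along the closed contour $\sC$. For the first identity~\eqref{eq:simplication_L}, I would write
\[
    \frac{1}{z-T} H \frac{1}{z-T} z = \frac{\rd}{\rd z}\left( \frac{1}{z-T} \right) H \frac{1}{z-T} z \cdot(-1) \ \text{?}
\]
more carefully: since $\frac{\rd}{\rd z}\frac{1}{z-T} = -\frac{1}{(z-T)^2}$, we have $\frac{1}{z-T}H\frac{1}{z-T} = -\frac{\rd}{\rd z}\left(\frac{1}{z-T}\right)H\frac{1}{z-T} \cdot$ — this is not quite a total derivative because of the $H$ in the middle. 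So instead I would use the trace: by cyclicity, $\Tr\left(\frac{1}{z-T}H\frac{1}{z-T}z\right) = \Tr\left(H\frac{1}{(z-T)^2}z\right) = -\Tr\left(H\frac{\rd}{\rd z}\frac{1}{z-T}\right)z$. Then integrate by parts in $z$ over the closed contour $\sC$: the boundary term vanishes, and $\oint_\sC z \frac{\rd}{\rd z}\left(\frac{1}{z-T}\right)\rd z = -\oint_\sC \frac{1}{z-T}\rd z$. This yields $\Tr\left(\frac{1}{2\ri\pi}\oint_\sC \frac{1}{z-T}H\frac{1}{z-T}z\,\rd z\right) = \Tr\left(\frac{1}{2\ri\pi}\oint_\sC \frac{1}{z-T}H\,\rd z\right) = \Tr(\Gamma_\bt H)$, the last equality being the Cauchy formula $\frac{1}{2\ri\pi}\oint_\sC \frac{1}{z-T}\rd z = \1(T<0) = \Gamma_\bt$ from Lemma~\ref{lem:existence_contour}. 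Throughout, the trace-class bounds of the type~\eqref{eq:typical_bound} (using Lemma~\ref{lem:a_and_A}, so $H \in \fS_1$ when $\bh \in \ell^1$, and uniform resolvent bounds on $\sC$) justify moving the trace inside the integral and the integration by parts.

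For~\eqref{eq:simplication_H} the idea is the same but one level up. By cyclicity of the trace,
\[
    \Tr\left( \frac{1}{z-T}H\frac{1}{z-T}K\frac{1}{z-T}z \right)
\]
is symmetric in the two "slots" in the sense that we can cycle to put either resolvent factor at the front. The key observation is that $\frac{\rd}{\rd z}\left(\frac{1}{z-T}H\frac{1}{z-T}K\frac{1}{z-T}\right)$ equals minus the sum of the three terms obtained by squaring each of the three resolvent factors in turn; taking the trace and using cyclicity, all three become equal, so each equals $-\frac13 \frac{\rd}{\rd z}\Tr\left(\frac{1}{z-T}H\frac{1}{z-T}K\frac{1}{z-T}\right)$. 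Hence
\[
    \Tr\left( \frac{1}{z-T}H\frac{1}{z-T}K\frac{1}{z-T} \right) = -\frac13 \frac{\rd}{\rd z}\Tr\left( \frac{1}{z-T}H\frac{1}{z-T}\frac{1}{z-T}K \right)\ \text{?}
\]
— again not literally a derivative, so I would instead directly compute: write the integrand times $z$ as $\Tr\left(H\frac{1}{z-T}K\frac{1}{(z-T)^2}\right)z + \cdots$ using cyclicity, recognize $\frac{1}{(z-T)^2} = -\frac{\rd}{\rd z}\frac{1}{z-T}$, and integrate by parts in $z$ on the closed contour. The boundary terms vanish, and the derivative hitting the $z$ factor produces $\Tr\left(H\frac{1}{z-T}K\frac{1}{z-T}\right)$, while the derivative hitting the remaining $\frac{1}{z-T}$ inside regenerates a term of the original form. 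Collecting, one gets $4\,(\text{original}) = 2\,(\text{original}) + 2\,\Tr\left(\frac{1}{z-T}H\frac{1}{z-T}K\right)$-type relation after integration, which rearranges to~\eqref{eq:simplication_H}. Concretely I expect the bookkeeping to be: the factor $4$ on the left comes from differentiating the original degree-$3$-in-resolvent expression, the extra derivatives land symmetrically (three equal terms by cyclicity) and the $z$-weight combined with the integration by parts collapses one resolvent power, leaving the factor $2$ on the right.

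The main obstacle is purely the careful bookkeeping in~\eqref{eq:simplication_H}: getting the combinatorial factor exactly right (why $4 \mapsto 2$ and not some other ratio) requires tracking which resolvent gets differentiated and using cyclicity of the trace at each stage, and one must be scrupulous that every intermediate operator is trace-class so that cyclicity and the interchange of $\Tr$ and $\oint_\sC$ are legitimate. The latter is handled uniformly by~\eqref{eq:typical_bound}-style estimates: on $\sC$ all resolvents $\frac{1}{z-T}$ are bounded by $C$ in operator norm (Lemma~\ref{lem:existence_contour}), $H,K \in \fS_1$ with $\|H\|_{\fS_1} \le 2\|\bh\|_{\ell^1}$ (Lemma~\ref{lem:a_and_A}), and $\sC$ is compact, so the integrands are $\fS_1$-valued continuous functions of $z$ and Fubini/cyclicity apply. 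Once the two identities are in hand, substituting them into the expressions for $L_\bt$ and $H_\bt$ derived above gives exactly the formulas in the statement of Lemma~\ref{lem:F_is_smooth} (note $\mu\sum(t_n-2)h_n + \mu\sum h_n$, coming from the $\frac{\mu}{2}\sum(h_n+2t_n-2)h_n$ term, combines into $\mu\sum(t_n-1)h_n$ for the linear part), completing the proof.
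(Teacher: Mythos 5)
Your proof of~\eqref{eq:simplication_L} works and is a minor variant of the paper's: you integrate by parts in $z$ on the closed contour, whereas the paper writes $z = (z-T) + T$ so that $\oint \frac{z\,\rd z}{(z-T)^2}$ splits into $\oint \frac{\rd z}{z-T} = 2\ri\pi\,\Gamma_\bt$ plus $T\oint\frac{\rd z}{(z-T)^2} = 0$. Both are the same algebra packaged differently, and your justification for exchanging $\Tr$ and $\oint$ (uniform resolvent bounds on $\sC$, $H \in \fS_1$) is the right one.

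Your treatment of~\eqref{eq:simplication_H}, however, is genuinely incomplete and contains an error you notice but do not repair. Your first attempt claims that the three terms from $\frac{\rd}{\rd z}\Tr(AHAKA)$ (with $A := (z-T)^{-1}$) ``all become equal by cyclicity'': this is false. Cycling gives $\Tr(A^3HAK)$, $\Tr(A^2HA^2K)$, $\Tr(A^3HAK)$, and the middle term is not a cyclic rearrangement of the other two. You then pivot to the right idea (integrate by parts on the degree-2 quantity) but never finish; the concluding ``$4\cdot(\text{original}) = 2\cdot(\text{original}) + \cdots$'' sentence is a guess, not a derivation. The missing step is: from $\oint \frac{\rd}{\rd z}\bigl[z\,\Tr(AHAK)\bigr]\rd z = 0$ one gets $\oint\Tr(AHAK)\rd z = \oint z\bigl[\Tr(A^2HAK) + \Tr(AHA^2K)\bigr]\rd z$, and then one must argue that $\Tr(A^2HAK) = \Tr(AHA^2K) = \Tr(AHAKA)$. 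The first equality is \emph{not} cyclicity alone — it needs the transpose identity $\Tr M = \Tr M^\top$ together with the fact that $T$, $H$, $K$ are real symmetric (so $A$, $H$, $K$ are all symmetric). Once that is in place you get $2\oint z\,\Tr(AHAKA)\rd z = \oint\Tr(AHAK)\rd z$, i.e.~\eqref{eq:simplication_H}.

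The paper's route is shorter: it differentiates the already-proved identity~\eqref{eq:simplication_L} with respect to $T$ in direction $K$. The product rule produces exactly two terms, $\Tr(AKAHA\,z)$ and $\Tr(AHAKA\,z)$, which are equal by the same cyclicity-plus-transpose argument, so the factor $2$ (hence the $4$ versus $2$) appears immediately with no integration by parts on the higher-degree expression. Your IBP route is salvageable but requires more bookkeeping and the same hidden transpose step; the $T$-derivative trick is the cleaner path and you should note it.
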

    
    \begin{proof}
        First, writing that $z = (z-T) + T$, we get
        \[
        \dfrac{1}{2 \ri \pi}\oint_{\sC} \dfrac{z \rd z}{(z - T)^2} =
        \dfrac{1}{2 \ri \pi}\oint_{\sC} \dfrac{ \rd z }{z - T} + \dfrac{T}{2 \ri \pi}\oint_{\sC} \dfrac{ \rd z }{(z - T)^2},
        \]
        and the second term vanishes by the Cauchy residual formula. We recognize the spectral projector 
        \[
        \Gamma_\bt := \1(T < 0) = \dfrac{1}{2 \ri \pi} \oint_\sC \dfrac{\rd z}{z - T},
        \]
        in the first term. This and the cyclicity of the trace gives~\eqref{eq:simplication_L}. We now differentiate this equality with respect to $T$, in the direction $K$. We get
        \begin{align*}
            & \dfrac{1}{2 \ri \pi} \oint_{\sC} \Tr \left( \frac{1}{z - T} K \dfrac{1}{z - T} H \dfrac{1}{z - T} +  \frac{1}{z - T} H \dfrac{1}{z - T} K \dfrac{1}{z - T} \right) z \rd z \\
            & \quad  = 
            \dfrac{1}{2 \ri \pi} \oint_\sC \Tr \left( \dfrac{1}{z - T} K \dfrac{1}{z - T} H \right) \rd z.
        \end{align*}
        Using again the cyclicity of the trace gives~\eqref{eq:simplication_H}.
    \end{proof}

    \section{Proofs of the Lemmas }
    \label{sec:proofs}
    
    In this section, we provide the proofs of the Lemmas appearing in Section~\ref{ssec:strategy_proof}.
    
    \subsection{Proof of Lemma~\ref{lem:Talphas}}
    \label{ssec:proof:Talphas}
    
    We first prove Lemma~\ref{lem:Talphas}, which compares $T^+$ and $\widetilde{T}_{\alpha, s}^+$. The fact that $\Theta_{\alpha, s} T^+ = \widetilde{T}_{\alpha, s}^+ \Theta_{\alpha, s}$ is a simple computation. Taking the adjoint gives the second equality of~\eqref{eq:key_equality_T_Theta}.
    
    Let us now prove that $(z - \widetilde{T}_{\alpha, s})$ is invertible for $\alpha$ small enough. The operator $T^+ - \widetilde{T}_{\alpha, s}^+$ satisfies
    \[
    \left( T^+ - \widetilde{T}_{\alpha, s}^+ \right)_{n, n+1} = \begin{cases} 
        t_n^+(1 - \re^{- \alpha}) & \ \text{if} \ n < s \\
        0 & \ \text{if} \ n \ge s
    \end{cases}, \quad 
    \left( T^+ - \widetilde{T}_{\alpha, s}^+ \right)_{n+1, n} = \begin{cases} 
        t_n^+(1 - \re^{\alpha}) & \ \text{if} \ n < s \\
        0 & \ \text{if} \ n \ge s
    \end{cases},
    \]
    and $\left( T^+ - \widetilde{T}_{\alpha, s}^+ \right)_{i,j} = 0$ if $| i -j | \neq 1$. Reasoning as in Lemma~\ref{lem:a_and_A}, we deduce that
    \begin{equation} \label{eq:norm_T-Talphas}
        \| T^+ - \widetilde{T}_{\alpha, s}^+ \|_{\rm op} \le 2 \max_{n \in \Z} | t_n^+ |\cdot \max \{ |1 - \re^{-\alpha}|, |1 - \re^{\alpha}|\} = 2 (W + \delta) (\re^{\alpha} - 1).
    \end{equation}
    This bound is independent of $s \in \Z$, and goes to $0$ as $\alpha \to 0$. Since $(z - T^+)$ is invertible for all $z \in \sC$, we deduce that for $\alpha$ small enough, $(z - \widetilde{T}_{\alpha, s}^+)$ is invertible with bounded inverse, and satisfies $\| (z - \widetilde{T}_{\alpha, s}^+)^{-1} \|_{\rm op} \le C$ for a constant $C$ independent of $z \in \sC$.
    
    Finally, from the equality $\Theta_{\alpha, s} T^+ = \widetilde{T}_{\alpha, s}^+ \Theta_{\alpha, s}$, we get $\Theta_{\alpha, s} (z - T^+) = (z - \widetilde{T}_{\alpha, s}^+) \Theta_{\alpha, s}$, which gives, as wanted
    \[
    \Theta_{\alpha, s}\dfrac{1}{z - T^+} = \dfrac{1}{z - \widetilde{T}_{\alpha,s}^+}\Theta_{\alpha, s}  , 
    \quad \text{and} \quad
    \dfrac{1}{z - T^+} \Theta_{\alpha, s} = \Theta_{\alpha, s} \dfrac{1}{z - (\widetilde{T}_{\alpha,s}^+)^*} .
    \]

    \subsection{Proof of Lemma~\ref{lem:sL_invertible}}
    \label{ssec:proof:sL_invertible}
    
    We now prove Lemma~\ref{lem:sL_invertible}. The first and most important step is to prove the following Proposition, whose proof is postponed to Section~\ref{ssec:proof_coercivity}
    \begin{proposition}\label{prop-coercive}
        For the dimerized configuration $\bt^+$, the hessian $H_{\bt^+}$ is bounded on $\ell^2(\Z) \times \ell^2(\Z)$ and coercive.
    \end{proposition}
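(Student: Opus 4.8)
The plan is to prove the two assertions of Proposition~\ref{prop-coercive} separately: boundedness, which is essentially free from Lemma~\ref{lem:F_is_smooth}, and coercivity, which is the substantive part. For boundedness, recall from the discussion after Lemma~\ref{lem:F_is_smooth} that $|H_{\bt^+}(\bh,\bk)| \le C \|H\|_{\fS_2} \|K\|_{\fS_2} \le C' \|\bh\|_{\ell^2} \|\bk\|_{\ell^2}$ using Lemma~\ref{lem:a_and_A}, so $H_{\bt^+}$ extends to a bounded symmetric bilinear form on $\ell^2(\Z)$. The point is then to show there is $c > 0$ with $H_{\bt^+}(\bh,\bh) \ge c \|\bh\|_{\ell^2}^2$ for all $\bh \in \ell^2(\Z,\R)$.

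The natural approach is to exploit the translation invariance of $\bt^+$ by passing to the Bloch/Fourier representation. Since $\bt^+$ is $2$-periodic, the operator $T^+$ is a periodic Jacobi operator, and after a Bloch transform $\ell^2(\Z) \cong L^2(\mathbb{T}, \C^2)$ (with $\mathbb{T}$ the Brillouin zone) it becomes multiplication by a $2\times 2$ Hermitian matrix $\widehat{T}^+(k)$ whose eigenvalues are $\pm \varepsilon(k)$ with $\varepsilon(k)^2 = 4(W^2\cos^2 k + \delta^2 \sin^2 k)$, so that $\varepsilon(k) \ge 2\delta > 0$. A real sequence $\bh$ corresponds, roughly speaking, to a $2$-component symbol and the quadratic form $H_{\bt^+}(\bh,\bh) = \mu\|\bh\|_{\ell^2}^2 + 2\,\Tr\big(\tfrac{1}{2\ri\pi}\oint_\sC \tfrac{1}{z-T^+} H \tfrac{1}{z-T^+} H \, \rd z\big)$ becomes, after computing the contour integral by residues, a fiberwise expression of the form $\mu\|\bh\|^2 - $ (a manifestly non-negative correction coming from the spectral projector $\Gamma^+ = \1(T^+<0)$). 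Concretely, $\tfrac{1}{2\ri\pi}\oint_\sC \tfrac{1}{z-T^+} H \tfrac{1}{z-T^+}\rd z$ is the derivative of $\Gamma^+$ in the direction $H$, and one can rewrite $2\Tr(\cdot)$ using the standard identity that, for a spectral projector $P$ onto the negative subspace, the second-order term of $\Tr(T+H)_-$ equals $-2\,\Tr\big(P H P^\perp \tfrac{1}{|T|} P^\perp H P\big)$ or a symmetrized version thereof; because $|T^+| \ge 2\delta$ this correction is bounded above by $\tfrac{1}{2\delta}\| P^\perp H P\|_{\fS_2}^2 \le \tfrac{1}{2\delta}\|H\|_{\fS_2}^2 \le \tfrac{2}{\delta}\|\bh\|_{\ell^2}^2$, but this crude bound only gives coercivity when $\mu$ is large. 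To get coercivity for all $\mu>0$ one must use the \emph{structure} of $H$: the perturbation $H$ is not an arbitrary Hilbert–Schmidt operator, it is tridiagonal with zero diagonal, and one checks that the off-diagonal structure forces $\|P^\perp H P\|_{\fS_2}^2$ to be strictly smaller than $\tfrac14\|H\|_{\fS_2}^2$ in a quantitative, $k$-dependent way.

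So the key steps, in order, are: (i) record boundedness from Lemma~\ref{lem:F_is_smooth}; (ii) Bloch-decompose $T^+$, $H_{\bt^+}$, and a test sequence $\bh$, writing $H_{\bt^+}(\bh,\bh)$ as an integral over $\mathbb{T}$ of a quadratic form in the Fourier coefficients of $\bh$; (iii) evaluate the contour integral fiberwise by residues, obtaining an explicit $2\times 2$ (or $4\times 4$, once one accounts for the doubling of the cell needed to represent a real tridiagonal $\bh$) positive-semidefinite matrix $M(k)$ with $H_{\bt^+}(\bh,\bh) = \int_{\mathbb{T}} \langle \widehat{h}(k), M(k)\widehat{h}(k)\rangle\,\rd k$; (iv) show $M(k) \ge c\,\mathrm{Id}$ uniformly in $k$ by a direct computation, using $\varepsilon(k)\ge 2\delta$ and the explicit eigenvectors of $\widehat{T}^+(k)$, and in particular checking that the worst case $k$ (where the gap $\varepsilon(k)$ is smallest, i.e.\ $k=0$ or $k=\pi$) still yields a strictly positive form because there the "dangerous'' direction in $\bh$-space is annihilated by the off-diagonal constraint. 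I expect step (iv) — verifying the uniform lower bound on $M(k)$, and especially handling the band-edge values of $k$ where the naive estimate degenerates — to be the main obstacle; the residue computation in step (iii) is mechanical but bookkeeping-heavy, and one must be careful that the contour $\sC$ encloses exactly the lower band $[-2W,-2\delta]$ and that the doubling of the unit cell is done consistently so that real $\bh$ are correctly represented.
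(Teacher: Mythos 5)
Your approach is genuinely different from the paper's, and it contains a gap that is conceptual rather than merely computational.

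The paper does not Bloch-decompose at all. It instead reduces to the finite closed chain of length $L$, where $\bt^+_L$ is the \emph{minimizer} of $\cE^{(L)}$, and applies a Kennedy--Lieb-type convexity estimate (the convexity of $x\mapsto -\sqrt{x}-\tfrac18 x^2$ on $[0,1]$, fed with $A=T^2/\|T\|_{\rm op}$) to get a quantitative lower bound
\[
\cE^{(L)}(\bt)-\cE^{(L)}(\bt_L^+)\;\ge\;\frac{1}{8\|T\|_{\rm op}^3}\,\Tr_L\!\big[T^4-\bra T^2\ket^2\big],
\]
whose second-order expansion at $\bt=\bt_L^++s\bh$ directly yields coercivity via the explicit $2\times 2$ blocks $Q_n$ with $\det Q_n=32\,W_L^2\delta_L^2>0$, followed by a compactly-supported-$\bh$, $L\to\infty$ limiting argument. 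The crucial ingredient is that $\bt_L^+$ is a \emph{minimizer}, which gives the ``for free'' one-sided quadratic bound; no explicit dispersion $M(k)$ is ever computed.

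The gap in your proposal is that nowhere do you invoke the criticality (or minimality) of $\bt^+$, i.e.\ the relation between $W$, $\delta$, and $\mu$ enforced by the Euler--Lagrange equation~\eqref{eq:def:critical}. Your step (iv), ``show $M(k)\ge c\,\mathrm{Id}$ uniformly,'' cannot be true for an arbitrary two-periodic gapped $T^+$ with parameters $(W,\delta)$: the electronic correction to $\mu\|\bh\|_{\ell^2}^2$ is negative and of order $1/\delta$, and along the gap equation $\delta$ is exponentially small in $\mu$ (not large), so the naive bound $\mu - C/\delta$ is increasingly negative and \emph{no} amount of exploiting the tridiagonal/chiral structure of $H$ by itself will rescue positivity of the symbol; it is precisely the stationarity of the energy in $(W,\delta)$ that makes $M(0)$ nonnegative in the two zero-momentum directions (the ones that shift $W$ and $\delta$), and one must feed that relation into the residue computation. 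Without it, $M(k)\ge c>0$ is simply false for a generic dimerized configuration, so the plan as written cannot close. (A secondary issue: your intermediate claim that the off-diagonal structure forces $\|P^\perp HP\|_{\fS_2}^2<\tfrac14\|H\|_{\fS_2}^2$ is unmotivated and, absent the gap equation, not strong enough even if true.) The Bloch route can in principle be made to work and would give sharper information (an explicit phonon dispersion), but only after inserting the self-consistency condition satisfied by $(W,\delta)$; as stated, your proposal is missing that essential input, which the paper's convexity argument supplies implicitly by working directly with the minimizer.
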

    
    Using this result, we can prove that $\sL$ is a symmetric bounded invertible operator on $\ell^2(\Z)$. Recall that $\sL$ is defined in~\eqref{eq:def:sL} by
    \[
    (\sL\bu)_n = \mu u_n +  4 \left( \frac{1}{2\ri\pi}\oint_{\sC} \left( \frac{1}{z-T^+}U\frac{1}{z - T^+} \right) \rd z \right)_{n, n+1}.
    \]
    As we already noticed in~\eqref{eq:vLu}, we have 
    \[
    \langle \bv, \sL \bw \rangle_{\ell^2} = \langle \sL \bv,  \bw \rangle_{\ell^2} = H_{\bt^+}(\bv, \bw).
    \]
    This equality is first valid for $\bv, \bw$ compactly supported, but can be extended for $\bv, \bw \in \ell^2(\Z)$ by continuity of $H_{\bt^+}$. This already proves that $\sL$ is a symmetric bounded operator on $\ell^2(\Z)$. In addition, the coercivity of $H_{\bt^+}$ shows that $\sL$ is invertible with bounded inverse (Lax-Milgram theorem).
    
    \medskip
    
    We now focus on the map $\widetilde{\sL}_{\alpha, s}$ defined in~\eqref{eq:def:L_alpha_s}. We claim that $\| \widetilde{\sL}_{\alpha, s} - \sL \|_{\rm op}$ goes to $0$ as $\alpha \to 0$. This will eventually prove that $\widetilde{\sL}_{\alpha, s}$ is also invertible with bounded inverse.
    
    We have, for $\bv, \bw \in \ell^2(\Z)$ , 
    \begin{align*}
        \langle \bv, (\sL - \widetilde{\sL}_{\alpha, s}) \bu \rangle_{\ell^2} &
        = 
        2 \Tr \left( \frac{1}{2 \ri \pi} \oint_{\sC} \left[ \dfrac{1}{z - T^+} U \frac{1}{z - T^+} V - \dfrac{1}{z - \widetilde{T}^+_{\alpha, s}} U \frac{1}{z - (\widetilde{T}^+_{\alpha, s})^*} V  \right]  \rd z \right).
    \end{align*}
    We have
    \begin{align*}
        & \dfrac{1}{z - T^+} U \frac{1}{z - T^+} - \dfrac{1}{z - \widetilde{T}^+_{\alpha, s}} U \frac{1}{z - (\widetilde{T}^+_{\alpha, s})^*}
        = \\
        & \quad = \left( \dfrac{1}{z - T^+} - \dfrac{1}{z - \widetilde{T}^+_{\alpha, s}} \right) U \frac{1}{z - T^+} 
        + 
        \dfrac{1}{z - \widetilde{T}^+_{\alpha, s}} U \left( \frac{1}{z - T^+} - \frac{1}{z - (\widetilde{T}^+_{\alpha, s})^*} \right) \\
        & \quad =  
        \dfrac{1}{z - T^+} (\widetilde{T}^+_{\alpha, s} - T^+) \dfrac{1}{z - \widetilde{T}^+_{\alpha, s}}  U \frac{1}{z - T^+} 
        + \dfrac{1}{z - \widetilde{T}^+_{\alpha, s}} U  \frac{1}{z - T^+} ((\widetilde{T}^+_{\alpha, s})^* - T^+) \frac{1}{z - (\widetilde{T}^+_{\alpha, s})^*}.
    \end{align*}
    We then use estimates of the form ($R$ stands for resolvent)
    \[
    \Tr (R_1 (T - \widetilde{T}) R_2 U R_3 V) \le \left\| R_1 (T - \widetilde{T}) R_2 U R_3 V \right\|_{\fS_1} \le \| R_1 \|_{\rm op} \| R_2 \|_{\rm op}  \| R_3 \|_{\rm op} \| T - \widetilde{T}  \|_{\rm op} \| U \|_{\fS_2} \| V \|_{\fS_2}.
    \]
    We deduce that there is $C \ge 0$ so that, for $\alpha$ small enough,
    \[
    \left| \langle \bv, (\sL - \widetilde{\sL}_{\alpha, s}) \bu \rangle_{\ell^2} \right| \le C \| \widetilde{T}^+_{\alpha, s} - T^+ \|_{\rm op} \| U \|_{\fS_2} \| V \|_{\fS_2} \le 2 C \| \widetilde{T}^+_{\alpha, s} - T^+ \|_{\rm op} \| \bu \|_{\ell^2} \| \bv \|_{\ell^2},
    \]
    where we used Lemma~\ref{lem:a_and_A} in the last inequality. We proved in Lemma~\ref{eq:norm_T-Talphas} that $\| \widetilde{T}^+_{\alpha, s} - T^+ \|_{\rm op} \to 0$ as $\alpha \to 0$. Together with the fact that $\sL$ is invertible, we deduce that there is $\alpha_* > 0$ and $C \ge 0$ so that, for all $0 \le \alpha < \alpha_*$ and all $s \in \Z$, the operator $\widetilde{\sL}_{\alpha, s}$ is invertible with $\| ( \widetilde{\sL}_{\alpha, s} )^{-1} \|_{\rm op} \le C$. This concludes the proof of Lemma~\ref{lem:sL_invertible}

    \subsection{Proof of Lemma~\ref{lem:Q}}
    \label{ssec:proof:Q}
    
    Finally, we focus on the map $\widetilde{Q}_{\alpha, s, U}$ defined in~\eqref{eq:def:Q}.
    First, using that $\sum_n (A)_{n,n+1}^2 \le \| A \|_{\fS_2}^2$ and estimates of the form
    \[
    \| R_1 (\Theta V) R_2 (W \Theta) R_3 \|_{\fS_2} \le \| R_1 \|_{\rm op} \| R_2 \|_{\rm op} \| R_3 \|_{\rm op} \| \Theta V \|_{\fS_4} \| W \Theta \|_{\fS_4},
    \]
    we get 
    \[
    \left\| \widetilde{Q}_{\alpha, s, U}(\bv, \bw) \right\|_{\ell^2(\Z)}^2 \le C \| \Theta_{\alpha, s} V \|_{\fS_4} \| W \Theta_{\alpha, s}\|_{\fS_4}.
    \]
    It remains to bound $\| \Theta_{\alpha, s} U \|_{\fS_4}$ by $\|\theta_{\alpha, s} \bu \|_{\ell^4}$. To do so, we follow the steps of Lemma~\ref{lem:a_and_A}, and prove that for all $1 \le p < \infty$ and all $\bu \in \ell^p(\Z)$, we have $\Theta_{\alpha, s} U$ in $\fS_p$ (in $\cB$ if $p = \infty$), and 
    \begin{equation} \label{eq:ThetaU_versus_thetau}
        \| \Theta_{\alpha, s} U \|_{\fS_p} \le C_p \|\theta_{\alpha, s} \bu \|_{\ell^p}
    \end{equation}
    for a constant $C_p$ independent of $\bu$ (and $\| \Theta_{\alpha, s} U \|_{\rm op} \le C_\infty \|\theta_{\alpha, s} \bu \|_{\ell^\infty}$ for $p = \infty$). We use below the fact that
    \begin{equation} \label{eq:theta_n+1_theta_n}
        \theta_{\alpha, s}(n) \le \theta_{\alpha, s}(n+1) \le \re^\alpha \theta_{\alpha, s}(n) .
    \end{equation}
    
    \medskip
    
    First, for $p = \infty$, we have, for $\psi \in \ell^2(\Z)$, 
    \begin{align*}
        \|  \Theta_{\alpha, s} U \psi \|_{\ell^2}^2 & = \sum_{n \in \Z} \theta_{\alpha, s}^2(n) | u_{n-1} \psi_{n-1} + u_n \psi_{n+1} |^2
        \le 2 \sum_{n \in \Z} \theta_{\alpha, s}^2(n) | u_{n-1} |^2 | \psi_{n-1} |^2  + \theta_{\alpha, s}^2(n) | u_n |^2 | \psi_{n+1} |^2 \\
        & \le 2 \re^\alpha \| \theta_{\alpha, s} \bu \|_{\ell^\infty}^2 \| \psi \|_{\ell^2}^2 + 2 \| \theta_{\alpha, s} \bu \|_{\ell^\infty}^2 \| \psi \|_{\ell^2}^2.
    \end{align*}
    We used that $|a+b|^2 \le 2 | a |^2 + 2 | b |^2$ for the first inequality, and~\eqref{eq:theta_n+1_theta_n} for the second. This proves the bound
    \[
    \|  \Theta_{\alpha, s} U \|_{\rm op}^2 \le \left( 2 \re^\alpha + 2 \right) \| \theta_{\alpha, s} \bu \|_{\ell^\infty}^2.
    \]
    
    \medskip
    
    In the case $p = 1$, we have by duality
    \begin{align*}
        \|  \Theta_{\alpha, s} U \|_{\fS_1} & = \sup_{K \in \cB \atop \| K \|_{\rm op} = 1} | \Tr(\Theta_{\alpha, s} U K) | = \sup_{K \in \cB \atop \| K \|_{\rm op} = 1} | \sum_{n \in \Z} u_n \theta_{\alpha, s}(n) K_{n+1, n} + u_n \theta_{\alpha, s}(n+1) K_{n, n+1} | \\
        & \le \sum_{n \in \Z} | u_n \theta_{\alpha, s}(n) | +  \sum_{n \in \Z} | u_n \theta_{\alpha, s}(n+1) | \le \| \theta_{\alpha, s} \bu \|_{\ell^1} + \re^\alpha \| \theta_{\alpha, s} \bu \|_{\ell^1} .
    \end{align*}
    Here, we used that both $| K_{n, n+1} |$ and $|K_{n+1, n}|$ are smaller than 1, and~\eqref{eq:theta_n+1_theta_n} for the last inequality.
    
    We conclude that~\eqref{eq:ThetaU_versus_thetau} holds for all $1 \le p \le \infty$ using Riesz-Thorin interpolation.
    
    
    \subsection{Coercivity of the Hessian at the dimerized configuration}
    \label{ssec:proof_coercivity}

    In this section, we prove Proposition \ref{prop-coercive}. Recall that
    \[
    H_{\bt^+}(\bh, \bh) = \mu \Vert\bh\Vert^2+ 2  \Tr \left(  \dfrac{1}{ 2 \ri \pi} \oint_{\sC}\frac{1}{z - T^+} H \frac{1}{z - T^+} H \rd z \right) 
    \]
    We already proved that $H_{\bt^+}$ is a bounded quadratic form on $\ell^2(\Z)$. We now prove
    that, for the dimerized configuration $\bt^+$, the Hessian $H_{\bt^+}$ is a coercive bilinear map on $\ell^2(\Z)$, namely that there is $C > 0$ so that, for all $\bh \in \ell^2(\Z)$, we have
    \[
    H_{\bt^+} (\bh, \bh) \ge C \| \bh \|^2_{\ell^2}.
    \]
    
    By density, it is enough to prove the result for all compactly supported sequences $\bh$. Assume that $\bh$ is such a sequence, so that $h_{n} = 0$ for all $| n | \ge S$. First, we claim that there is $C > 0$ so that, for all $L$ large enough, we have 
    \begin{equation} \label{eq:HL_coercive}
        H^{(L)}_{\bt^+_L}(\bh, \bh) \ge C \| \bh \|_{\ell^2}^2,
    \end{equation}
    where $H^{(L)}_{\bt^+_L}(\bh, \bh)$ is the hessian of the SSH model for the closed $L = 2N$ chain, defined by
    \[
    H^{(L)}_{\bt^+_L}(\bh, \bh) = \mu \Vert\bh\Vert^2_{\ell^2}+2 \Tr_L \left(  \dfrac{1}{ 2 \ri \pi} \oint_{\sC}  \frac{1}{z - T^+_L} H \frac{1}{z - T^+_L} H \rd z \right).
    \]
    Here, $\Tr_L$ is the trace for $L \times L$ hermitian matrices, $\bt^+_L$ is the dimerized ground state of the closed $L$-chain ($L$ even), of the dimerized form (see~\cite{KenLie04})
    \begin{equation} \label{eq:dimerized_L}
        \bt_L^+ = W_L + (-1)^n \delta_L,
    \end{equation}
    and $T^+_L$ is the associated $L \times L$ hermitian matrix. It was proved in~\cite{GarSer12, GonKouSer23} that $W_L \to W$ and $\delta_L \to \delta$ as $L \to \infty$.
    
    \medskip
    
    To prove~\eqref{eq:HL_coercive}, as in~\cite{KenLie04}, we use the convexity of the function $f : [0, 1] \to \R$ defined by
    $$
    [0,1] \ni x \mapsto - \sqrt{x} - \frac{1}{8}x^2.
    $$ 
    As a consequence, the map $A \mapsto \Tr (f(A))$ is convex on the set of hermitian matrices with spectrum in $[0, 1]$. This implies that, with $A = \frac{1}{\| T \|_{\rm op}} T^2$,
    \begin{equation*}
        - \Tr_L(\sqrt{T^2}) \ge - \Tr_L(\sqrt{\bra T^2 \ket}) + \frac{1}{8} \dfrac{1}{\| T \|_{\rm op}^{3}} \Tr_L[T^4 - \bra T^2 \ket^2],
    \end{equation*}
    where $\bra A \ket$ is the average of $A$ over all translations, namely
    \[
    \bra A \ket = \frac{1}{L} \sum_{k=0}^{L-1} \Theta_1^k A \Theta_1^{-k}, 
    \qquad 
    \Theta_1 = \begin{pmatrix}
        0 & 1 & 0 & \cdots & 0 \\
        0 & 0 & 1 & 0 & \cdots \\
        \vdots & \vdots & \vdots & \ddots &\vdots \\
        0 & 0 & \cdots & 0 & 1 \\
        1 & 0 & \cdots & 0 & 0
    \end{pmatrix}.
    \]
    We deduce that
    \begin{equation} \label{eq:ineq_EL}
        \cE^{(L)}(\bt) \ge \frac{\mu}{2}\sum_{n\in\mathbb{Z}/L\mathbb{Z}}^{}(t_n - 1)^2 - \Tr_L(\sqrt{\bra T^2 \ket}) + \frac{1}{8} \dfrac{1}{\| T \|_{\rm op}^{3}} \Tr_L[T^4 - \bra T^2 \ket^2].
    \end{equation}
    
    Since the operator $T_L^{\pm}$ always corresponds to a $2$-periodic configuration, it holds
    \[
    \Theta_1 \left( T_L^{\pm} \right)^2 \Theta_1^* = \left( T_L^{\pm} \right)^2
    \quad \text{and, in particular} \quad 
    \left( T_L^{\pm} \right)^2 = \left\bra  \left( T_L^{\pm} \right)^2 \right\ket.
    \]
    We deduce that for $\bt = \bt^+_L$, the third term of~\eqref{eq:ineq_EL} vanishes. Actually, the proof of Kennedy and Lieb~\cite{KenLie04} shows that $\bt_L^\pm$ also minimizes the quantity
    \[
       \bt \mapsto \frac{\mu}{2}\sum_{n\in\mathbb{Z}/L\mathbb{Z}}^{}(t_n - 1)^2 - \Tr_L(\sqrt{\bra T^2 \ket}).
    \]
    We deduce that
    \[
    \cE^{(L)}(\bt) - \cE^{(L)}(\bt^+_L) \ge  \frac{1}{8} \dfrac{1}{\| T \|_{\rm op}^{3}} \Tr_L[T^4 - \bra T^2 \ket^2].
    \]
    We apply this inequality for $\bt = \bt^+_L + s \bh$ and get that
    \[
    \cE^{(L)}(\bt^+_L + s \bh) - \cE^{(L)}(\bt^+_L) \ge \frac{1}{8} \dfrac{1}{\| T^+_L + s H \|_{\rm op}^3} \Tr_L \left[ (T^+_L + sH)^4 - \bra (T^+_L + sH)^2 \ket ^2 \right]
    \]
    For the denominator, we use the fact that, for $s$ small enough, we have $\| T^+_L + s H \|_{\rm op} \le 2 \| T^+_L \|_{\rm op}$. For the numerator, expanding the expression and using that $ (T_L^+)^2  = \langle (T_L^+)^2 \rangle$, so that $\Tr_L( (T_L^+)^3H) = \Tr_L \left( \langle (T_L^+)^2 \rangle \langle T_L^+ H \rangle \right)$ and $\Tr_L(\langle (T_L^+)^2 \rangle \langle H^2 \rangle) = \Tr_L((T_L^+)^2 H^2)$ (the orders $O(1)$ and $O(s)$ vanish), we obtain
    \begin{align*}
        \cE^{(L)}(\bt^+_L + s \bh) - \cE^{(L)}(\bt^+_L)
        & \ge \frac{s^2}{16} \dfrac{1}{\| T^+_L \|_{\rm op}^3} \Tr_L \Big[ \left( T_L^+ H + H T_L^+ \right)^2  -   \langle T_L^+ H + H T_L^+\rangle^2  \Big] + o(s^2) \\
        & = \frac{s^2}{16} \dfrac{1}{\| T^+_L \|_{\rm op}^3} \left( \left\| T_L^+ H + H T_L^+ \right\|_{\fS_2}^2  -  \left\|  \langle T_L^+ H + H T_L^+\rangle \right\|_{\fS_2}^2  \right) + o(s^2)
    \end{align*}

    The previous computation is valid for all $\bh$. We now use the fact that $\bh$ is compactly supported in $[-S, S]$, and that $L \gg S$. This allows to prove that the last term is small. More specifically, we have the following.
    \begin{lemma}
        For all $S \in \N$ all $L \gg S$, all $\bt \in \C^L$ and all $\bh \in \C^L$ compactly supported in $[-S, S]$, we have
        \[
        \left\| \bra TH \ket \right\|_{\fS_2}^2 \le \frac{6S}{L} \| \bt \|_{\ell^\infty}^2 \| \bh \|_{\ell^2}^2.
        \]
    \end{lemma}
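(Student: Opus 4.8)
The plan is to exploit the very rigid structure of $\bra TH\ket$. First I would observe that, by reindexing the sum defining $\bra\,\cdot\,\ket$, the matrix $\bra TH\ket$ commutes with the cyclic shift $\Theta_1$ (for any $A$, $\bra A\ket\,\Theta_1=\Theta_1\,\bra A\ket$), hence is a circulant $L\times L$ matrix. Moreover, since both $T$ and $H$ connect only nearest neighbours and have vanishing diagonal, the product $TH$ has nonzero entries only on the main diagonal and on the two diagonals at distance $2$: indeed $(TH)_{i,i\pm1}=0$ because $T$ and $H$ have zero diagonal, while $(TH)_{i,i}=t_ih_i+t_{i-1}h_{i-1}$, $(TH)_{i,i+2}=t_ih_{i+1}$ and $(TH)_{i,i-2}=t_{i-1}h_{i-2}$. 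Since $L\gg S$ forces $L>4$, the offsets $0,\pm2$ are distinct modulo $L$, so $\bra TH\ket$ is a circulant matrix supported on exactly these three offsets.

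Next I would compute the symbol. Writing $c_m$ for the offset-$m$ entry of the circulant matrix $\bra TH\ket$, one has $c_m=\frac1L\sum_{l}(TH)_{l,l+m}=\frac1L\Tr\!\left(TH\,\Theta_1^{-m}\right)$, and the explicit entries above give $c_0=\frac2L\sum_l t_lh_l$, $c_{\pm2}=\frac1L\sum_l t_lh_{l\pm1}$, and $c_m=0$ otherwise. The Hilbert–Schmidt norm of a circulant matrix is computed by Parseval (diagonalisation by the discrete Fourier transform): $\|\bra TH\ket\|_{\fS_2}^2=L\sum_m|c_m|^2$. Hence
\[
    \|\bra TH\ket\|_{\fS_2}^2=\frac1L\left(4\,\bigl|\sum_l t_lh_l\bigr|^2+\bigl|\sum_l t_lh_{l+1}\bigr|^2+\bigl|\sum_l t_lh_{l-1}\bigr|^2\right).
\]
Finally I would estimate the three sums using the compact support of $\bh$: each of them is an inner product of a translate of $\bh$ against $\bt$ restricted to a set of indices that is a translate of ${\rm supp}(\bh)\subseteq[-S,S]$, so Cauchy–Schwarz gives $\bigl|\sum_l t_lh_l\bigr|\le\|\bt\|_{\ell^\infty}\sqrt{S}\,\|\bh\|_{\ell^2}$ and the same for the two shifted sums (counting the at most $\sim 2S$ sites in the support produces the displayed numerical constant). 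Plugging this in yields $\|\bra TH\ket\|_{\fS_2}^2\le\frac{(4+1+1)S}{L}\|\bt\|_{\ell^\infty}^2\|\bh\|_{\ell^2}^2=\frac{6S}{L}\|\bt\|_{\ell^\infty}^2\|\bh\|_{\ell^2}^2$, which is the claim.

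I do not expect a genuine obstacle: the whole content lies in the two structural remarks (translation average $=$ circulant, and $TH$ has only three nonzero diagonals), after which the estimate is pure bookkeeping. The one conceptual point worth keeping in mind is the mechanism producing the gain $S/L$: the matrix $TH$ carries all of its Hilbert–Schmidt mass on the $\sim S$ rows meeting ${\rm supp}(\bh)$, and the averaging $\bra\,\cdot\,\ket$ spreads that mass over all $L$ rows, so $\|\bra TH\ket\|_{\fS_2}^2$ is smaller than $\|TH\|_{\fS_2}^2$ by a factor of order $S/L$. If one prefers to avoid circulant-matrix vocabulary altogether, the identity $\|\bra TH\ket\|_{\fS_2}^2=\frac1L\sum_m\bigl|\Tr(TH\,\Theta_1^{-m})\bigr|^2$, with only $m\in\{0,\pm2\}$ contributing, can be checked directly and leads to the same estimate.
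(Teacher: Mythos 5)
Your proposal is correct and follows essentially the same route as the paper: both compute the three nonzero offsets of the circulant matrix $\bra TH\ket$ (the paper calls them $2a_0$, $a_1$, $a_{-1}$, you call them $c_0$, $c_{\pm2}$), bound each by Cauchy--Schwarz using the compact support of $\bh$, and sum the squares. Your framing in terms of circulant matrices and Parseval is just a repackaging of the paper's direct observation that $\bra A\ket$ is constant along each of the three surviving diagonals, so that $\|\bra A\ket\|_{\fS_2}^2$ is $L$ times the sum of the squared offset values.
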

    
    \begin{proof}
        Set $A = TH$, so that 
        \[
        A_{n,n} = t_{n} h_{n} + h_{n-1} t_{n-1}, \quad
        A_{n, n+2} = t_{n+1} h_n, \quad
        A_{n, n-2} = t_{n-2} h_{n-1}.
        \]
        The matrix $\bra A \ket$ is of the form
        \[
        \bra A \ket_{n,n} = 2 a_0, \quad \bra A \ket_{n,n+2} = a_1, \quad \bra A \ket_{n, n-2} = a_{-1}, \qquad \text{with} \quad 
        a_m := \frac{1}{L} \sum_{k=0}^{L-1} t_{k+m} h_{k}.
        \]
        Using that $h$ is compactly supported, by Cauchy-Schwarz we get
        \[
        | a_m |^2 = \frac{1}{L^2} \left( \sum_{k=-S}^S t_{k+m} h_k \right)^2 
        \le \frac{1}{L^2} \| \bt \|_{\ell^\infty}^2 \left( \sum_{k=-S}^S h_k \right)^2 
        \le \frac{1}{L^2} \| \bt \|_{\ell^\infty}^2 \| \bh \|_{\ell^2}^2 S.
        \]
        We obtain
        \[
        \| \bra A \ket \|_{\fS_2}^2 = \sum_{n,m} \bra A \ket_{n,m}^2 = 
        L (2 a_0)^2 + L a_1^2 + L a_{-1}^2
        \le 6 L \frac{1}{L^2} \| \bt \|_{\ell^\infty}^2 \| \bh \|_{\ell^2}^2 S.
        \]
    \end{proof}
    This proves that 
    \begin{align*}
        \cE^{(L)}(\bt^+_L + s \bh) - \cE^{(L)}(\bt^+_L)
        & = \frac{s^2}{16} \dfrac{1}{\| T^+_L \|_{\rm op}^3} \left( \left\| T_L^+ H + H T_L^+ \right\|_{\fS_2}^2  -  \frac{12 S}{L} \| \bt_L^+ \|_{\ell^\infty}^2 \| \bh \|_{\ell^2}^2 \right) + o(s^2).
    \end{align*}
    Finally, we bound from below the remaining $\left\| T_L^+ H + H T_L^+ \right\|_{\fS_2}^2$. A computation shows that the matrix $A := T H + HT$ satisfies
    \[
    A_{n,n} = 2 (t_n h_n + t_{n-1} h_{n-1}) , \quad A_{n, n+2} = t_n h_{n+1} + h_n t_{n+1}, \quad A_{n, n-2} = t_{n-1} h_{n-2} + h_{n-1} t_{n-2},
    \]
    and $A_{i,j} = 0$ otherwise. Squaring all terms and summing gives
    \begin{align*}
        \left\| T_L^+ H + H T_L^+ \right\|_{\fS_2}^2 & = \sum_{n} 4 (t_n h_n + t_{n-1} h_{n-1})^2 + ( t_n h_{n+1} + h_n t_{n+1} )^2 + (t_{n-1} h_{n-2} + h_{n-1} t_{n-2})^2.
    \end{align*}
    Expanding, relabelling all sums, and using that $\bt_L^+$ is dimerized, of the form~\eqref{eq:dimerized_L}, we obtain 
    \[
    \left\| T_L^+ H + H T_L^+ \right\|_{\fS_2}^2 = 2 \sum_{n \in \Z} \left\bra \begin{pmatrix} h_n \\ h_{n+1} \end{pmatrix}, Q_n \begin{pmatrix} h_n \\ h_{n+1} \end{pmatrix} \right\ket, 
    \quad \text{with} \quad 
    Q_n = \begin{pmatrix}
        2 t_n^2  +   t_{n+1}^2 & 3 t_n t_{n+1} \\ 3 t_n t_{n+1} &  t_n^2  +  2 t_{n+1}^2
    \end{pmatrix}.
    \]
    We have
    \[
    \Tr(Q_n) = 3 (t_n^2 + t_{n+1}^2) = 6(W_L^2 + \delta_L^2)
    \]
    and
    \[
    \det Q_n = ( 2 t_n^2  +   t_{n+1}^2 ) (t_n^2  +  2 t_{n+1}^2) - 9 t_n^2 t_{n+1}^2 = 2 t_n^4 + 2 t_{n+1}^4 - 4 t_n^2 t_{n+1}^2 = 2 (t_n^2 - t_{n+1}^2)^2
    = 32 \, W_L^2 \delta_L^2.
    \]
    Since $\delta_L \to \delta > 0$ and $W_L \to W$ for $L$ large enough, there is a constant $C \ge 0$ such that $Q_n \ge C > 0$ for a constant $C$ independent of $n$ and $L$ large enough. So 
    \[
    \left\| T_L^+ H + H T_L^+ \right\|_{\fS_2}^2  \ge 2 C \| \bh \|_{\ell^2}^2.
    \]
    Altogether, we proved that for $L$ large enough, we have
    \begin{align*}
        \cE^{(L)}(\bt^+_L + s \bh) - \cE^{(L)}(\bt^+_L)
        & \ge \frac{s^2}{16} \dfrac{1}{\| T^+_L \|_{\rm op}^3} \left( 2 C  -  \frac{12 S}{L} \| \bt_L^+ \|_{\ell^\infty}^2 \right) \| \bh \|_{\ell^2}^2  + o(s^2) \\
        & \ge \widetilde{C} s^2  \| \bh \|_{\ell^2}^2  + o(s^2),
    \end{align*}
    where $\widetilde{C}$ is independent of $L$, for $L$ large enough ($L \ge L_0$, where $L_0$ depends on the support $S$ of $\bh$). This proves the lower bound~\eqref{eq:HL_coercive} for $H_{\bt^+_L}^{(L)}$. 
    
    To conclude the proof, we note that 
    \begin{align*}
        \left| \Tr_L \left( \frac{1}{z - T^+} H \frac{1}{z - T^+} H \right) 
        - \Tr_L \left( \frac{1}{z - T^+_L} H \frac{1}{z - T^+_L} H \right)  \right| \le C \| H \|_{\fS_2}^2 \| T^+ - T^+_L \|_{\rm op, L}.
    \end{align*}
    Since $W_L \to W$ and $\delta_L \to \delta$, we have $\| T^+ - T^+_L \|_{\rm op, L} \to 0$ as $L = 2N$ (even) goes to infinity. So for $L$ large enough, we have
    \[
    \left| H_{\bt^+}(\bh, \bh) - H_{\bt_L^+}^{(L)}(\bh, \bh) \right| \le \frac{C}{2} \| \bh \|_{\ell^2}^2,
    \] 
    where $C$ is the bound in~\eqref{eq:HL_coercive}. This proves $H_{\bt^+}(\bh, \bh) \ge \frac{C}{2} \| \bh \|_{\ell^2}^2$, where the constant $C$ is independent of $\bh$. We proved the bound for $\bh$ compactly supported, but by density, it can be extended for all $\bh \in \ell^2(\Z)$, hence the coercivity of $H_{\bt^+}$.

    \bibliographystyle{plain}
    \bibliography{biblio}
    
\end{document}